\newcommand{\A}{\mathcal{A}}
\newcommand{\R}{\mathcal{R}}
\newcommand{\V}{\mathcal{V}}
\newcommand{\W}{\mathcal{W}}
\newcommand{\Z}{\mathcal{Z}}
\renewcommand{\S}{\mathcal{S}}
\newcommand{\QQ}{\mathbb{Q}}
\newcommand{\NN}{\mathbb{N}}
\newcommand{\RR}{\mathbb{R}}
\newcommand{\Lp}{\mathcal{L}_{\mathrm{sup}}}
\newcommand{\Lb}{\mathcal{L}_{\mathrm{sub}}}
\newcommand{\WQ}{\mathcal{W}_{\mathbb{Q}}}
\newcommand{\dr}{\bm{\mathrm{r}}}
\newcommand{\dw}{\bm{\mathrm{w}}}
\newcommand{\dz}{\bm{\mathrm{z}}}
\newcommand{\da}{\bm{\mathrm{a}}}
\renewcommand{\vec}[1]{\overline{#1}}
\newcommand{\cev}[1]{\underline{#1}}
\newcommand{\rr}{\vec{r}}
\newcommand{\lr}{\hspace{1pt}\cev{r}}
\newcommand{\rw}{\vec{w}}
\newcommand{\lw}{\cev{w}}
\newcommand{\rz}{\vec{z}}
\newcommand{\lz}{\hspace{1pt}\cev{z}}
\newcommand{\ra}{\vec{a}}
\newcommand{\la}{\hspace{1pt}\cev{a}}
\newcommand{\rkap}{\vec{\kappa}}
\newcommand{\lkap}{\hspace{1pt}\cev{\kappa}}
\newcommand{\rkapp}{\mathrlap{\phantom{\kappa}^{\hspace{1pt}\prime}}\vec{\kappa}}
\newcommand{\lkapp}{\mathrlap{\phantom{\kappa}^{\hspace{1pt}\prime}}\cev{\kappa}}
\newcommand{\rrp}{\mathrlap{\phantom{r}^{\hspace{1pt}\prime}}\rr}
\newcommand{\rrpx}[1]{\mathrlap{\phantom{r}^{\hspace{1pt}\prime}}\rr_{#1}}
\newcommand{\rrppx}[1]{\mathrlap{\phantom{r}^{\hspace{1pt}\prime\prime}}\rr_{#1}}
\newcommand{\lrp}{\mathrlap{\phantom{r}^{\hspace{1pt}\prime}}\lr}
\newcommand{\lrppx}[1]{\mathrlap{\phantom{r}^{\hspace{1pt}\prime\prime}}\lr_{#1}}
\newcommand{\lrmn}{\mathrlap{\phantom{r}^{\hspace{1pt}m}}\lr_n\hspace{2pt}}
\newcommand{\rrmn}{\mathrlap{\phantom{r}^{\hspace{1.5pt}m}}\rr_n\hspace{2.5pt}}
\newcommand{\rrmnn}{\mathrlap{\phantom{r}^{\hspace{1.5pt}m}}\rr_{n+1}}
\newcommand{\rrmx}[1]{\mathrlap{\phantom{r}^{\hspace{1.5pt}m}}\rr_{#1}}
\newcommand{\sublkap}{\mathrlap{\reflectbox{\ensuremath{\scriptstyle\mathchar"017E}}}\scriptstyle\kappa}
\newtheorem{Theorem}{Theorem}
\newtheorem{Definition}[Theorem]{Definition}
\newtheorem{Lemma}[Theorem]{Lemma}
\newtheorem{Corollary}[Theorem]{Corollary}
\newtheorem{Remark}[Theorem]{Remark}
\begin{document}
\title{On Effective Convergence in Fekete's Lemma and Related Combinatorial Problems in Information Theory}
\titlerunning{On Effective Convergence in Fekete's Lemma}
% If the paper title is too long for the running head, you can set
% an abbreviated paper title here
%
\author{Holger Boche\inst{1,3,4,5}\orcidlink{0000-0002-8375-8946} \and
Yannik Böck\inst{1}\orcidlink{0000-0001-7640-6988} \and
Christian Deppe\inst{2,3}\orcidlink{0000-0002-2265-4887}}
\authorrunning{H. Boche, Y. Böck, C. Deppe}
% First names are abbreviated in the running head.
% If there are more than two authors, 'et al.' is used.
%
\institute{Technical University of Munich, TUM School of Computation, Information and Technology, Munich, Germany \and
Technical University of Braunschweig, Institute for Communications Technology, Braunschweig, Germany \and 6G-life, 6G research hub, Germany
\and Munich Quantum Valley (MQV), Leopoldstra{\ss}e 244, 80807 M\"unchen, Germany
\and Munich Center for Quantum Science and Technology, Schellingstra{\ss}e 4, 80799 München, Germany\\
\email{boche@tum.de, yannik.boeck@tum.de, christian.deppe@tu-braunschweig.de}}
\maketitle              % typeset the header of the contribution
\centerline{\bf In Memory of Ning Cai}

\sloppy

\begin{abstract}
Fekete's lemma is a well known result from combinatorial mathematics that shows the existence of a limit value related to super- and subadditive sequences of real numbers.
In this paper, we analyze Fekete's lemma in view of the arithmetical hierarchy of real numbers by \emph{\citeauthor{ZhWe01}} and fit the results into an information-theoretic context. We introduce special sets associated to super- and subadditive sequences and prove their effective equivalence to \(\Sigma_1\) and \(\Pi_1\). Using methods from the theory established by \emph{\citeauthor{ZhWe01}}, we then show that the limit value emerging from Fekete's lemma is, in general, not a computable number. Given a sequence that additionally satisfies non-negativity, we characterize under which conditions the associated limit value can be computed effectively and investigate the corresponding modulus of convergence. Subsidiarily, we prove a theorem concerning the structural differences between computable sequences of computable numbers and computable sequences of rational numbers. We close the paper by a discussion on how our findings affect common problems from information theory.
\keywords{Fekete's lemma \and Information Theory \and Capacity \and Arithmetical hierarchy \and Computability \and Effective convergence}
% \PACS{PACS code1 \and PACS code2 \and more}
\end{abstract}

\section{Introduction}                              %
\label{sec:Intr}                                    %
% ------------------------------------------------- %

    A sequence \((a_n)_{n\in\NN}\) of real numbers is called \emph{superadditive} if it satisfies \(a_{n+m} \geq a_n + a_m\) for all \(n,m \in \NN\). It is called \emph{subadditive} if it satisfies
    \(a_{n+m} \leq a_n + a_m\) for all \(n,m \in \NN\). Fekete's lemma proves the existence of a limit value related to sequences that are either super- or subadditive:

    \begin{Lemma}[Fekete \cite{F23}]\label{lem:fekete}
        \begin{itemize}
            \item[1.] 
                Let \((\ra_n)_{n\in\NN}\) be a sequence of real numbers that satisfies superadditivity. Then, there exists \(x_* \in \RR\cup \{+\infty\}\) such that 
                    \begin{align}
                        \sup_{n\in\NN} \frac{\ra_n}{n}
                        = \lim_{n\to\infty} \frac{\ra_n}{n}
                        = x_*
                    \end{align}
                holds true.
            \item[2.] 
                Let \((\la_n)_{n\in\NN}\) be a sequence of real numbers that satisfies subadditivity. Then, there exists \(x_* \in \RR\cup \{-\infty\}\) such that 
                    \begin{align}
                        \inf_{n\in\NN} \frac{\la_n}{n}
                        = \lim_{n\to\infty} \frac{\la_n}{n}
                        = x_*
                    \end{align}
                holds true.
        \end{itemize}
    \end{Lemma}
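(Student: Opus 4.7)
The plan is to prove the superadditive case directly and then deduce the subadditive case by negation, since \((\la_n)\) is subadditive iff \((-\la_n)\) is superadditive.

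For the superadditive case, set \(x_* := \sup_{n\in\NN} \ra_n/n \in \RR \cup \{+\infty\}\). The inequality \(\limsup_{n\to\infty} \ra_n/n \leqx x_*\) is immediate from the definition of the supremum, so the entire content of the proof lies in establishing \(\liminf_{n\to\infty} \ra_n/n \geqx x_*\). I would fix an arbitrary \(c \in \RR\) with \(c < x_*\) and, using the definition of the supremum, pick some \(k \in \NN\) such that \(\ra_k/k > c\). For any \(n \geqx k\), write \(n = qk + r\) with \(q \in \NN\) and \(0 \leqx r < k\) via Euclidean division. Iterated application of superadditivity yields
\begin{equation}
    \ra_n = \ra_{qk+r} \geqx q\, \ra_k + \ra_r,
\end{equation}
where the term \(\ra_r\) is omitted when \(r=0\). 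Dividing by \(n\) gives
\begin{equation}
    \frac{\ra_n}{n} \geqx \frac{qk}{n}\cdot\frac{\ra_k}{k} + \frac{\ra_r}{n}.
\end{equation}

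The key observation is that, as \(n \to \infty\) with \(k\) held fixed, \(qk/n \to 1\) while \(\ra_r\) ranges over the finite set \(\{\ra_0,\ldots,\ra_{k-1}\}\) and is therefore bounded, so \(\ra_r/n \to 0\). Passing to the limit inferior yields \(\liminf_{n\to\infty} \ra_n/n \geqx \ra_k/k > c\), and since \(c<x_*\) was arbitrary, \(\liminf_{n\to\infty} \ra_n/n \geqx x_*\). Combining this with the trivial upper bound closes the proof of part~1.

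For part~2, I would apply part~1 to the sequence \((-\la_n)_{n\in\NN}\), which is superadditive precisely because \((\la_n)_{n\in\NN}\) is subadditive, observing that \(\sup_n (-\la_n)/n = -\inf_n \la_n/n\) and that taking the limit commutes with negation. I do not expect any real obstacle: the only subtle point is handling the remainder term \(\ra_r\) correctly, and this is resolved by noting that the remainder ranges over finitely many indices whose values contribute negligibly once divided by \(n\).
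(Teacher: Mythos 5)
Your proof is correct and follows the standard textbook argument for Fekete's lemma (Euclidean division by a near-optimal block length, iterated application of superadditivity, and the observation that the quotient term dominates while the remainder term vanishes as $n\to\infty$). The paper itself does not prove this lemma --- it is stated as a cited classical result from Fekete's 1923 paper --- so there is no in-paper proof to compare against. One trivial notational slip: you write that $\ra_r$ ranges over $\{\ra_0,\ldots,\ra_{k-1}\}$, but $\ra_0$ is not defined when the index set is $\NN=\{1,2,\ldots\}$; since you already stipulate that the $\ra_r$ term is dropped when $r=0$, the correct finite set is $\{\ra_1,\ldots,\ra_{k-1}\}$, which is what actually makes the boundedness argument go through.
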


    Since the works of Bruijn and Erd\"os \cite{BE51,BE52}, several authors have developed generalizations of Fekete's lemma (see \cite{BR06}). For an up-to-date overview and interesting discussions, we refer to the work of Füredi and Ruzsa \cite{FR18}. In the present paper, we aim to characterize the computability-properties of the limit value by applying a theory established by \emph{\citeauthor{ZhWe01}} \cite{ZhWe01}, known as  \emph{the arithmetical hierarchy of real numbers}.

    In addition to common problem statements from information theory, which we will treat in detail throughout the paper, super- and subadditivity of sequences of real numbers occurs throughout different other areas of mathematics. In economics, it is an essential property of some cost functions. Similar relations also appear in physics and in combinatorial optimization (see \cite{S97}). Further applications of Fekete's lemma exist in the research of cellular automatas \cite{C08} and for analogues on cancellative amenable semigroups \cite{K12}. Another important example can be found in Ramsey theory. The proof of Roth's fundamental theorem \cite{R52} also makes use of Fekete's lemma. Thus, our results are of high relevance for a great variety of research areas.

    The majority of problems from information theory involve the calculation of \emph{channel capacities}. In most cases, a certain channel model gives rise to a monotonically non-decreasing sequence \((M_n)_{n\in\NN}\) of natural numbers, where \(n\) corresponds to the number of channel uses and \(M\) corresponds to the number of messages transmittable through \(n\) uses of the channel. Subsequently, the quantities \(\liminf_{n\to\infty} \sfrac{1}{n}\cdot\log_2 M_n\) and \(\limsup_{n\to\infty} \sfrac{1}{n}\cdot\log_2 M_n\), which are referred to as \emph{pessimistic} and \emph{optimistic} capacity, respectively, are of interest. The mapping \(n\mapsto M_n\) is fully characterized by a sequence of \emph{optimal codes} (with respect to a fixed error constraint) for the channel under consideration, since each optimal code determines the maximal number of messages transmittable given a certain number of channel uses. Assuming finite input and output alphabets, the set of possible codes is finite for each number of channel uses \(n\in\NN\). Thus, the construction of suitable codes is a purely combinatorial task, and a great deal of research has been conducted in this area. For a comprehensive introduction to combinatorial methods in information theory, see \cite{CsKo11}. Surprisingly, the mapping \(n \mapsto M_n\) does not have to be a recursive function. This observation, which we will discuss further in \emph{Remark \ref{RfooI}}, yields an interesting relation between information theory, combinatorics and the theory of computability.

    In general, optimistic and pessimistic capacity are not equal. However, some channel models exhibit a mathematical structure known as \emph{achievability and converse}. Essentially, this structure is characterized by a pair \(\big((\rw_n)_{n\in\NN},(\lw_n)_{n\in\NN}\big)\) of monotonic sequences that satisfy {\color{black}
    \begin{align}
        \inf_{n\in\NN} \lw_n    &\leq \liminf_{n\to\infty} \frac{1}{n}\log_2 M_n \nonumber\\
                                &\leq \limsup_{n\to\infty}  \frac{1}{n}\log_2 M_n \nonumber\\
                                &\leq \sup_{n\in\NN} \rw_n
    \end{align}
    as well as
    \begin{align}
        \sup_{n\in\NN} \rw_n \leq \inf_{n\in\NN} \lw_n,    
    \end{align}}
    in which case the quantity
    \begin{align}
        \liminf_{n\to\infty} \frac{1}{n}\log_2 M_n 
            &= \limsup_{n\to\infty} \frac{1}{n}\log_2 M_n \nonumber\\
            &= \lim_{n\to\infty}\frac{1}{n}\log_2 M_n    
    \end{align}
    is referred to as the \emph{capacity} of the channel (the sequence \((\lw_n)_{n\in\NN}\) will be referred to as "the converse" in this context). This case is particulary desirable with respect to computability, as we will discuss in \emph{Section \ref{converge}}. For a concise discussion on the approach of achievability and converse in information theory, we refer to \cite{Ah06}.

    A second possibility of proving the existence of a well-defined capacity is by the application of Fekete's lemma, in which case the equality of pessimistic and optimistic capacity follows without achievability and converse. However, the applicability of Fekete's Lemma highly depends on the individual channel model. In particular, it is necessary to prove that the sequence \((\sfrac{1}{n}\cdot\log_2 M_n)_{n\in\NN}\) satisfies superadditivity. The arguably most prominent example is Shannon's zero-error capacity, where the number \(M_n\) equals the independence number of the \(n\)th tensor-power of a specific simple graph:\pagebreak
    \begin{Theorem}[Shannon \cite{S56}]\label{Shannon}
        Denote by \(C_0(W)\) the \emph{zero-error capacity} of the discrete, memoryless channel \(W\), \(\Theta(G)\) the \emph{Shannon capacity} of the \emph{confusability graph} \(G\) corresponding to \(W\), and \(\alpha(G^{\boxtimes n})\) the \emph{independence number} of the \(n\)-fold strong graph product of \(G\) with itself. Then, we have
     	\begin{align}
     	     \Theta(G) :    &=\liminf_{n \to \infty}\alpha(G^{\boxtimes n})^{\frac 1n}\nonumber\\ 
     	                    &= \sup_{n \in \NN}\alpha(G^{\boxtimes n})^{\frac 1n}\nonumber\\
     	                    &= \lim_{n \to \infty} \alpha(G^{\boxtimes n})^{\frac 1n}\nonumber\\ 
     	                    &= 2^{C_0(W)} .
    	\end{align}
    \end{Theorem}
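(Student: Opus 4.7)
The plan is to reduce the statement to an application of Fekete's Lemma, with the main content lying in the graph-theoretic interpretation of zero-error codes. I would organize the argument in three stages: an operational identification, a superadditivity estimate, and the limit extraction.

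First, I would establish that $M_n := \alpha(G^{\boxtimes n})$ is exactly the maximum number of codewords in a zero-error block code of length $n$ for $W$. Unpacking the definition of the confusability graph $G$ of $W$, two input symbols $x,x'$ are adjacent (or equal) iff there exists an output $y$ with $W(y|x) W(y|x') > 0$, i.e., iff they are indistinguishable through a single use of $W$. Two input sequences $\mathbf{x},\mathbf{x}'\in V(G)^n$ are then zero-error confusable through $n$ parallel uses of $W$ iff they are confusable coordinatewise, which is precisely the adjacency (or equality) relation of the strong product $G^{\boxtimes n}$. Hence an independent set in $G^{\boxtimes n}$ corresponds to a maximum set of pairwise distinguishable input $n$-tuples, giving $M_n = \alpha(G^{\boxtimes n})$ and $2^{C_0(W)} = \lim_{n\to\infty} M_n^{1/n}$ provided the limit exists.

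Second, I would verify superadditivity of the sequence $a_n := \log_2 \alpha(G^{\boxtimes n})$. Given independent sets $I\subseteq V(G)^n$ and $J\subseteq V(G)^m$ in $G^{\boxtimes n}$ and $G^{\boxtimes m}$ respectively, the concatenation set $I\times J \subseteq V(G)^{n+m}$ is independent in $G^{\boxtimes(n+m)} = G^{\boxtimes n}\boxtimes G^{\boxtimes m}$: any two distinct concatenated tuples differ in at least one coordinate belonging to either $I$ or $J$, where they are non-adjacent by hypothesis, hence non-adjacent in the strong product. Taking $I, J$ of maximum size yields $\alpha(G^{\boxtimes (n+m)}) \geq \alpha(G^{\boxtimes n})\cdot\alpha(G^{\boxtimes m})$, i.e.\ $a_{n+m}\geq a_n + a_m$.

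Third, I would apply \emph{Lemma \ref{lem:fekete}(1)} to the superadditive sequence $(a_n)_{n\in\NN}$ to obtain
\begin{align}
    \lim_{n\to\infty} \frac{a_n}{n} = \sup_{n\in\NN} \frac{a_n}{n} \in \RR\cup\{+\infty\}.
\end{align}
Since $\alpha(G^{\boxtimes n}) \leq |V(G)|^n$, we have $a_n/n \leq \log_2|V(G)|$, so the supremum is finite. Exponentiating with base $2$, which is continuous and strictly increasing, transports the identity to $\lim_{n\to\infty}\alpha(G^{\boxtimes n})^{1/n} = \sup_{n\in\NN}\alpha(G^{\boxtimes n})^{1/n}$, and the existence of the limit forces $\liminf = \lim = \sup$, completing the chain of equalities. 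The main conceptual obstacle is the first stage, since it requires carefully translating the operational zero-error notion into the combinatorial language of the strong graph product; once this identification is made, superadditivity and Fekete's Lemma provide the rest essentially automatically.
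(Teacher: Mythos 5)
Your proposal is correct and takes essentially the approach the paper intends: the paper cites this result from Shannon without reproducing a proof, but frames it precisely as the prototypical application of \emph{Lemma \ref{lem:fekete}}, with \(M_n=\alpha(G^{\boxtimes n})\) the maximal zero-error code size and \(\log_2\alpha(G^{\boxtimes n})\) superadditive via the product of independent sets, which is exactly your three-stage argument. No gaps.
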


    For a number of other channel models, cf. e.g. \cite{ABBN13,BBS13, WNB16}, the application of Fekete's lemma leads to an multi-letter description of the respective capacity. To the authors' knowledge, it remains an open question whether single-letter descriptions for the capacities in \cite{ABBN13,BBS13, WNB16} exist. Given a fixed, computable channel, the descriptions in \cite{ABBN13,BBS13, WNB16} provide a computable sequence \((\sfrac{1}{n}\cdot\log_2 M_n)_{n\in\NN}\) of computable numbers that meet the requirements of Fekete's lemma. Up to now, however, it remains unknown if the respective capacities always attain computable values. In other words, it is not yet known whether these capacites can always be algorithmically computed, e.g. as a floating point number, up to any specified precision. As a corollary of our results, we obtain that in contrast to the achievability and converse approach, Fekete's lemma is structurally insufficient to provide a computable capacity.

    Finding computable expressions in information theory, such as the results in \cite{ABBN13}, \cite{BBS13} and \cite{WNB16}, was a central concern of Ning Cai. Ning Cai considered characterizing capacities as limits of multi-letter representations only as a preliminary result. Ning Cai was always looking for approaches to characterize the convergence speed algorithmically. Such general techniques are still unknown in information theory. In this paper we show that such techniques cannot exist for the Fekete lemma.

    Of course, multi-letter capacity formulae are also very helpful for practical applications. For example, practically relevant questions for the wiretap channel such as continuity and additivity were answered in \cite{boche2013capacity} and \cite{boche2015continuity}. This work was initiated by Ning Cai's questions and until today there are no general approaches in information theory to directly answer questions about continuity and additivity in an elementary way. Rudolf Ahlswede and Rheinhard Werner independently posed the question of the existence of such elementary direct approaches.

    Since \citeauthor{T36} published his work on the theory of computation \cite{T36, T36b}, it has been known that almost all real numbers are uncomputable. In an attempt to characterize different degrees of (un)computability, \emph{\citeauthor{ZhWe01}} \cite{ZhWe01} introduced the arithmetical hierarchy of real numbers, which is strongly related to the \emph{Kleene–Mostowski hierarchy} of subsets of natural numbers \cite{Kl43, Mo47}. In order for a real number \(a\in\RR\) to be computable, it has to be on the lowest Zheng-Weihrauch hierarchical level. The hierarchical level of a real number solely depends on the logical structure that is used to define the number. With respect to channel capacities, this logical structure is given by the corresponding channel model. In information theory, or, in the broadest sense, any mathematical theory that investigates asymptotic quantities, a comprehensive characterization of the hierarchical levels induced by the logical structures involved may be desirable for several reasons. Above all, it is necessary in order to be able to decide which of the asymptotic quantities can be numerically approximated in a feasible manner for practical purposes. Our work aims to contribute to this characterization in view of Fekete's lemma. In particular, we show that a whole class of uncomputable problems can be represented in terms of sub- and superadditive sequences of real numbers.

    The outline of the remainder of our work is as  follows. \emph{Section \ref{basic}} yields an overview on the fundamentals of computability theory, which will be applied subsequently. In \emph{Section \ref{converge}}, we will give a brief introduction on the sets \(\Sigma_1 \subset \RR\) and \(\Pi_1 \subset \RR\), which form the lowest Zheng-Weihrauch hierarchical level, and present the first part of our results: \emph{Theorems \ref{thm:CompleteMonRepr}} and \emph{\ref{T1.5}} address the problem of determining the modulus of convergence for monotonic sequences, as they occur in the information-theoretic achievability and converse approach. In \emph{Section \ref{const}}, we place Fekete's lemma in the context of the arithmetical hierarchy of real numbers. In particular, we prove in \emph{Theorem \ref{thm:LsubLsupRc}} an effective equivalence between \(\Sigma_1\)(\(\Pi_1\)) and the set of real numbers that can be charcaterized by a super(sub)additive sequence through Fekete's lemma. As an immediate consequence (\emph{Corollary \ref{C1.1}}), we obtain that convergence related to Fekete's lemma is generally ineffective. Several implications of this issue are discussed in \emph{Remark \ref{RfooI}} and \emph{Remark \ref{non}}. In \emph{Section \ref{effect}}, we characterize under which circumstances the convergence related to Fekete's lemma is effective. These results are presented in \emph{Theorem \ref{T1.3}}. \emph{Theorem \ref{T1.7}}, furthermore, addresses the problem of determining a suitable corresponding modulus of convergence. The Section concludes with \emph{Theorem \ref{thm:analogonFekete}}, an analogue to \emph{Theorem \ref{thm:CompleteMonRepr}} in the context of Fekete's lemma. \emph{Section \ref{comments}} discusses a subtle difference in the representation of computable sequences. This difference turns out to have significant implications, as is highlighted by \emph{Theorem \ref{thm:foo}}. The paper closes in \emph{Section \ref{sec:discussions}} with a discussion on the consequences of our findings for modern-day approaches in combinatorial coding and information theory.

\section{Preliminaries from the Theory of Computation}          %
\label{basic}                                                   %
% ------------------------------------------------------------- %

    In order to investigate Fekete's lemma in view of computability, we apply the theory of \emph{Turing machines} \cite{T36, T36b} and \emph{recursive functions} \cite{Kl36}. For brevity, we restrict ourselves to an informal description. A comprehensive formal introduction on the topic may be found in \cite{Sc74, W00, So87, PoRi17}.

    Turing machines are a mathematical model of what we intuitively understand as computation machines. In this sense, they yield an abstract idealization of today's real-world computers. Any algortithm that can be executed by a real-world computer can, in theory, be simulated by a Turing machine, and vice versa. In contrast to real-world computers, however, Turing machines are not subject to any restrictions regarding energy consumption, computation time or memory size. Furthermore, the computation of a Turing machine is assumed to be executed free of any errors.

    Recursive functions, more specifically referred to as \emph{\(\mu\)-recursive functions}, form a special subset of the set \(\bigcup_{n=0}^{\infty} \big\{ f : \NN^{n} \hookrightarrow \NN \big\}\), where we use the symbol "\(\hookrightarrow\)" to denote a \emph{partial mapping}. The set of recursive functions characterizes the notion of computability through a different approach. Turing machines and recursive functions are equivalent in the following sense: a function \(f : \NN^{n} \hookrightarrow \NN\) is computable by a Turing machine if and only if it is a recursive function \cite{Tu37}.

    In the following, we will introduce some definitions from \emph{computable analysis} \cite{W00, So87, PoRi17}, which we will apply subsequently.

    \begin{Definition}\label{ber}
        A sequence of rational numbers \((r_n)_{n\in\NN}\) is said to be  \emph{computable} if there exist recursive functions \(f_{\mathrm{si}},f_{\mathrm{nu}},f_{\mathrm{de}}:\NN\to\NN\) such that
        \begin{align}
            r_n= (-1)^{f_{\mathrm{si}}(n)}\frac {f_{\mathrm{nu}}(n)}{f_{\mathrm{de}}(n)}  
        \end{align}
        holds true for all \(n\in\NN\). A double sequence of rational numbers \((r_{n,m})_{n,m\in\NN}\) is said to be \emph{computable} if there exist recursive functions \(f_{\mathrm{si}},f_{\mathrm{nu}},f_{\mathrm{de}}:\NN\times\NN\to\NN\) such that
        \begin{align}
            r_{n,m}= (-1)^{f_{\mathrm{si}}(n,m)}\frac {f_{\mathrm{nu}}(n,m)}{f_{\mathrm{de}}(n,m)}  
        \end{align}
        holds true for all \(n,m\in\NN\).
    \end{Definition}

    \begin{Definition}\label{defeff}
        A sequence \((x_n)_{n\in\NN}\)  of real numbers is said to converge \emph{effectively} towards a number \(x_*\in\RR\) if there exists a recursive function \(\kappa:\NN\to\NN\) such that \(|x_*-x_n|<\sfrac{1}{2^N}\) holds true for all \(n,N\in\NN\) that satisfy \(n\geq \kappa(N)\).
    \end{Definition}

    The function \(\kappa\) is referred to as (recursive) \emph{modulus of convergence} for the sequence \((x_n)_{n\in\NN}\).

    \begin{Definition}\label{compreal}
        A real number \(x\) is said to be \emph{computable} if there exists a computable sequence of rational numbers that converges effectively towards \(x\). 
    \end{Definition}

    We denote the set of computable real numbers by \(\RR_c\). Given a computable number \(x\), a pair \(\big((r_n)_{n\in\NN},\kappa\big)\) consisting of a computable sequence \((r_n)_{n\in\NN}\) of rational numbers that satisfies \(\lim_{n\to\infty} r_n = x\) and a corresponding recursive modulus of convergence \(\kappa\) is called a \emph{standard description} of the number \(x\). We denote by \(\R_c\) the set of standard descriptions of computable numbers.

    \begin{Remark}\label{rem:RcEffectiveField}
        The set \(\RR_c\) is a subfield of the set \(\RR\). The field properties of \(\RR_c\) are effective in the following sense: there exist Turing machines \(TM_{+} : \R_c \times \R_c \rightarrow \R_c\) and \(TM_{\times} : \R_c \times \R_c \rightarrow \R_c\) such that if \(\dr_{x}\in\R_c\) and \(\dr_{y}\in\R_c\) are standard descriptions of \(x\in\RR_c\) and \(y\in\RR_c\), then \(TM_{+}(\dr_x,\dr_y)\) and \(TM_{\times}(\dr_x,\dr_y)\) are standard descriptions of \(x+y\) and \(x\cdot y\), respectively. Furthermore, the set \(\RR_c\) is countable and closed with respect to effective convergence.
    \end{Remark}

    \begin{Definition}\label{def:CSCN}
        A sequence \((x_n)_{n\in\NN}\) of computable numbers is called \emph{computable} if there exists a computable double sequence \((r_{n,m})_{n,m\in\NN}\) of rational numbers as well as a recursive function \(\kappa : \NN \times \NN \rightarrow \NN\) such that 
        \begin{align}
            |x_n - r_{n,m}| < \frac{1}{2^M}
        \end{align}
        holds true for all \(n,m,M\in\NN\) that satisfy \(m \geq \kappa(n,M)\).
    \end{Definition}

    The pair \(\big((r_{n,m})_{n,m\in\NN}, \kappa\big)\) is referred to as a \emph{standard description} of the sequence \((x_n)_{n\in\NN}\).

    An essential component in proving uncomputability of some kind is the notion of \emph{recursive} and \emph{recursively enumerable} sets, which we will introduce in the following.

    \begin{Definition} 
        A set \(A\subseteq \NN\) is said to be \emph{recursively enumerable} if there exists a recursive function \(f : \NN \hookrightarrow \NN\) with domain equal to \(A\).
    \end{Definition}

    \begin{Definition}
        A set \(A\subseteq \NN\) is said to be \emph{recursive} if the corresponding indicator function \(\mathds{1}_{A} : \NN \rightarrow \{0,1\}\) is a recursive function.
    \end{Definition}

    \begin{Remark}
        A set \(A \subseteq \NN\) is recursive if and only if both \(A\) and \(A^{c} := \NN\setminus A\) are recursively enumerable sets. The \emph{halting problem} for Turing machines ensures the existence of recursively enumerable sets that are nonrecursive.
    \end{Remark}

\section{Characterizing Computable Numbers by Monotonic Sequences}      %
\label{converge}                                                        %
% --------------------------------------------------------------------- %

    Throughout this section, we address the description of computable numbers in terms of monotonic sequences, which structurally corresponds to a description in the sense of achievability and converse. 

    A real number \(x_*\) is called \emph{upper semi-computable} if there exist a computable sequence \((\lr_n)_{n\in\NN}\) of rational numbers that converges towards \(x_*\) from above, i.e., \((\lr_n)_{n\in\NN}\) satisfies \(\lr_n \geq \lr_{n+1}\) for all \(n\in\NN\) as well as \(\lim_{n\to\infty} \lr_n = x_*\). 

    Likewise, a real number \(x_*\) is called \emph{lower semi-computable} if there exists a computable sequence \((\rr_n)_{n\in\NN}\) of rational numbers that converges towards \(x_*\) from below, i.e, \((\rr_n)_{n\in\NN}\) satisfies \(\rr_n \leq \rr_{n+1}\) for all \(n\in\NN\) as well as \(\lim_{n\to\infty} \rr_n = x_*\). 

    In the above context, the arrow symbols indicate the "direction" (on the real line) towards which the sequences \((\rr_n)_{n\in\NN}\) and \((\lr_n)_{n\in\NN}\) converge.
    \begin{Definition}\label{def:PiSigma}
        We denote by \(\Pi_1\) the set of upper semi-computable real numbers. Likewise, we denote by \(\Sigma_1\) the set of lower semi-computable real numbers.
    \end{Definition}
    We have \(\RR_c = \Sigma_1 \cap \Pi_1\) as well as \(\RR_c\setminus \Sigma_1 \neq \emptyset\) and \(\RR_c\setminus \Pi_1 \neq \emptyset\) \cite{ZhWe01}. In other words, a real number \(x_*\) is computable if and only if it is both upper and lower semi-computable, and there exist real numbers which satisfy only one of the two conditions. 

    The representation of computable numbers through monotonic sequences of rational numbers extends to computable sequences of computable numbers. That is, a real number \(x_*\) is computable if and only if there exists a computable sequence \((\rw_n)_{n\in\NN}\) of computable numbers that converges monotonically non-decreasingly towards \(x_*\) and a computable sequence \((\lw_n)_{n\in\NN}\) of computable numbers that converges monotonically non-increasingly towards \(x_*\). Subsequently, we will work towards a strengthening of this result. In particular, we want to prove that both \((\rw_n)_{n\in\NN}\) and \((\lw_n)_{n\in\NN}\) converge effectively towards \(x_*\) (convergence towards a computable number is not a sufficient condition for effective convergence, cf. \cite{Vy98}) and present an effective construction of the respective moduli of convergence.

    \begin{Lemma}\label{T1.4}
        Denote by \(\WQ\) the set of pairs \(\big((\rr_n)_{n\in\NN},(\lr_n)_{n\in\NN}\big)\) of computable sequences of rational numbers that satisfy the following:
        \begin{enumerate}
            \item 
                \(\rr_n\leq \rr_{n+1}\) for all \(n\in\NN\);
            \item 
                \(\lr_n\geq \lr_{n+1}\) for all \(n\in\NN\);
            \item 
                there exists \(x_*\in\RR\) with\newline \(\lim_{n\to\infty}\rr_n=\lim_{n\to\infty}\lr_n = x_*.\)
        \end{enumerate}
        There exists a Turing machine \(TM : \WQ \times \NN \rightarrow \NN\) such that 
        \begin{align}
            |x_*-\rr_n|<\frac 1{2^M}
            \quad \wedge \quad 
            |x_*-\lr_n|<\frac 1{2^M}
        \end{align}
        holds true for all \(n\in\NN\) that satisfy \(n\geq TM\big((\rr_n)_{n\in\NN}, (\lr_n)_{n\in\NN},M\big)\).
    \end{Lemma}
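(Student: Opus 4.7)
My plan is to exploit the sandwiching property built into the hypotheses: since $(\rr_n)_{n\in\NN}$ is non-decreasing and $(\lr_n)_{n\in\NN}$ is non-increasing, passing to the common limit gives $\rr_n \leq x_* \leq \lr_n$ for every $n\in\NN$, and hence both $|x_*-\rr_n|$ and $|x_*-\lr_n|$ are bounded above by the rational width $\lr_n - \rr_n$. It will therefore suffice to produce an $n$ at which this width drops below $1/2^M$.

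I would construct the Turing machine as follows. On input consisting of the standard descriptions of $(\rr_n)_{n\in\NN}$ and $(\lr_n)_{n\in\NN}$ together with $M\in\NN$, it extracts the recursive triples $(f_{\mathrm{si}}^{\rightarrow},f_{\mathrm{nu}}^{\rightarrow},f_{\mathrm{de}}^{\rightarrow})$ and $(f_{\mathrm{si}}^{\leftarrow},f_{\mathrm{nu}}^{\leftarrow},f_{\mathrm{de}}^{\leftarrow})$ from \emph{Definition \ref{ber}} and then iterates over $n = 0, 1, 2, \ldots$, computing $\rr_n$ and $\lr_n$ exactly as rationals, evaluating the rational difference $\lr_n - \rr_n$, and testing the decidable inequality $\lr_n - \rr_n < 1/2^M$. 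It halts and outputs the first $n$ for which the test succeeds.

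Termination follows directly from the assumption $\lim_{n\to\infty}\rr_n = \lim_{n\to\infty}\lr_n = x_*$, which forces $\lr_n - \rr_n \to 0$, so the sought-for $n$ exists. For correctness, I would use that monotonicity makes $n \mapsto \lr_n - \rr_n$ itself non-increasing: once the width is below $1/2^M$ at the output value $n_0 = TM(\cdot,\cdot,M)$, it stays below $1/2^M$ for every $n \geq n_0$, and the sandwich estimate then yields both desired bounds simultaneously.

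The only subtle point, and what I expect to be the main conceptual obstacle rather than a technical one, is that nothing in the hypothesis provides an a priori recursive bound on how fast the width shrinks; the algorithm is therefore an unbounded search rather than a closed-form expression in $M$. What nonetheless makes the search effective is the simultaneous availability of the upper-approximating sequence $(\lr_n)_{n\in\NN}$, which furnishes a decidable certificate for closeness to $x_*$ at every stage. Without it, one would be reduced to a purely $\Sigma_1$-description and would lose exactly the ability to detect when $\rr_n$ is close to $x_*$, which is the obstruction distinguishing $\Sigma_1$ from $\RR_c$ in the sense of \emph{Definition \ref{def:PiSigma}}.
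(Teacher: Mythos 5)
Your proposal is correct and follows essentially the same route as the paper's own proof: an unbounded search for the first index $n_0$ at which the rational width $\lr_{n_0}-\rr_{n_0}$ drops below $1/2^M$, combined with the monotone decrease of the width and the sandwich $\rr_n \leq x_* \leq \lr_n$ for $n \geq n_0$. The only difference is that you spell out the decidability of the rational comparison and the termination argument in more detail than the paper does.
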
\begin{proof}
        For \(M\in\NN\) arbitrary, let \(n_0\in\NN\) be the smallest number that satisfies \(|\rr_n-\lr_n|<\sfrac{1}{2^M}\). That is, we define
        \begin{align}
            n_0 := \min\left\{n\in\NN: |\rr_n-\lr_n|<\frac {1}{2^M}\right\}.
        \end{align}
        The sequence \((\lr_n-\rr_n)_{n\in\NN}\) is a computable sequence of rational numbers that converges monotonically non-increasingly to zero. Thus, the mapping 
        \begin{align} \big((\rr_n)_{n\in\NN}, (\lr_n)_{n\in\NN}, M\big) \mapsto n_0
        \end{align}
        is recursive and there exists a Turing machine \(TM : \WQ \times \NN \rightarrow \NN\) that computes the number \(n_0\) in dependence of \((\rr_n)_{n\in\NN}\), \((\lr_n)_{n\in\NN}\) and \(M\). For all \(n\in\NN\) that satisfy \(n \geq n_0\), we have
        \begin{align}
            |\rr_n-\lr_n|=\lr_n-\rr_n\leq |\rr_{n_0}-\lr_{n_0}|<\frac 1{2^M}
        \end{align}
        as well as \(\rr_n \leq x_* \leq \lr_n\). Consequently, 
        \begin{align}
            |x_*-\rr_n|<\frac 1{2^M}
            \quad \wedge \quad
            |x_*-\lr_n|<\frac 1{2^M}
        \end{align}
        holds true for all \(n\in\NN\) that satisfy \(n\geq n_0\). Thus, \(TM\) satisfies the required properties.
    \qed\end{proof}

    In the following, we establish the description of computable numbers in terms of computable, monotonic sequences of computable numbers.

    A computable sequence \((\rw_n)_{n\in\NN}\) of computable numbers is referred to as \emph{lower monotonic representation} of the real number \(x_*\) if it satisfies \(\rw_n\leq \rw_{n+1}\) for all \(n\in\NN\) as well as \(\lim_{n\to\infty}\rw_n= x_*\).

    Likewise, a computable sequence \((\lw_n)_{n\in\NN}\) of computable numbers is referred to as an \emph{upper monotonic representation} of the real number \(x_*\) if it satisfies \(\lw_n\geq \lw_{n+1}\) for all \(n\in\NN\) as well as \(\lim_{n\to\infty}\lw_n= x_*\).

    A pair \(\big((\rw_n)_{n\in\NN},(\lw_n)_{n\in\NN}\big)\) is referred to as a \emph{(complete) monotonic representation} of the real number \(x_*\) if \((\rw_n)_{n\in\NN}\) is a lower monotonic representation of \(x_*\) and \((\lw_n)_{n\in\NN}\) is an upper monotonic representation of \(x_*\). Furthermore, we denote by \(\W\) the set of standard descriptions
    \begin{align}
        \dw = \big((\rr_{n,m})_{n,m\in\NN}, \rkap, (\lr_{n,m})_{n,m\in\NN}, \lkap\big)
    \end{align}
    of (complete) monotonic representations of real numbers.

    \begin{Theorem}\label{thm:CompleteMonRepr}
        There exists a Turing machine \(TM : \W\times \NN \rightarrow \NN\) that satisfies the following:
        \begin{itemize}
            \item 
                If \(\big((\rw_n)_{n\in\NN},(\lw_n)_{n\in\NN}\big)\) is a monotonic representation of \(x_* \in \RR\) and \(\dw\) is a standard description of \(\big((\rw_n)_{n\in\NN},(\lw_n)_{n\in\NN}\big)\), then 
                \begin{align}
                    |x_*-\rw_n|<\frac 1{2^M}
                    \quad \wedge \quad 
                    |x_*-\lw_n|<\frac 1{2^M}
                \end{align}
                holds true for all \(n\in\NN\) that satisfy \(n\geq TM(\dw,M)\). 
        \end{itemize}
    \end{Theorem}\begin{proof}
        Consider the standard description \(\dw = \big((\rr_{n,m})_{n,m\in\NN}, \rkap, (\lr_{n,m})_{n,m\in\NN}, \lkap\big)\) of \(\big((\rw_n)_{n\in\NN},(\lw_n)_{n\in\NN}\big)\) and observe that for all \(n,k,l\in\NN\) that satisfy \(k,l \geq n\), we have
        \begin{gather}
            \rr_{n,\rkap(n,n)}  - \frac{1}{2^n} < \rw_n, \\
            \rw_n \leq \rw_{k} \leq x_* \leq \lw_{l} \leq \lw_n, \\
            \lw_n < \lr_{n,\sublkap(n,n)} + \frac{1}{2^n}.
        \end{gather}
        Define the sequences \((\rrp_n)_{n\in\NN}\) and \((\lrp_n)_{n\in\NN}\) through setting
        \begin{align}
            \rrp_n :&= \max \left\{\rr_{m,\rkap(m,m)} - \frac{1}{2^m} : 1\leq m \leq n \right\}, \\
            \lrp_n :&= \min \left\{\lr_{m,\sublkap(m,m)} + \frac{1}{2^m} : 1\leq m \leq n \right\}
        \end{align}
        for all \(n\in\NN\). Then, for all \(n,k,l\in\NN\) that satisfy \(k,l \geq n\), we have
        \begin{align}\label{eq:foo}
            \big|x_* - \rw_{k} \big| < \big|x_* - \rrp_n \big| ~ \wedge ~ 
            \big|x_* - \lw_{l} \big| < \big|x_* - \lrp_n \big|.
        \end{align}
        Both \((\rrp_n)_{n\in\NN}\) and \((\lrp_n)_{n\in\NN}\) are computable sequences of rational numbers and the mapping \(\dw \mapsto \big((\rrp_n)_{n\in\NN}, (\lrp_n)_{n\in\NN}\big)\) is recursive. Furthermore, the pair \(\big((\rrp_n)_{n\in\NN}, (\lrp_n)_{n\in\NN}\big)\) satisfies the requirements of \emph{Lemma \ref{T1.4}} with \(\lim_{n\to\infty}\rrp_n = \lim_{n\to\infty}\lrp_n = x_* \). Based on \eqref{eq:foo}, the theorem then follows by the application of \emph{Lemma \ref{T1.4}}.
    \qed\end{proof}

    \begin{Remark}
        Recall the standard description \(\dr = \big((r_n)_{n\in\NN},\kappa\big) \in \R_c\) of a computable number \(x_*\). If we can find two standard descriptions \(\vec{\dr} = \big((\rr_n)_{n\in\NN},\rkap\big) \in \R_c\) and \(\cev{\dr} = \big((\lr_n)_{n\in\NN},\lkap\big) \in \R_c\) of \(x_*\) that satisfy 
        \begin{enumerate}
            \item 
                \(\rr_n\leq \rr_{n+1}\) for all \(n\in\NN\),
            \item 
                \(\lr_n\geq \lr_{n+1}\) for all \(n\in\NN\),
        \end{enumerate}
        we know by \emph{Lemma \ref{T1.4}} that the pair \(\big((\rr_n)_{n\in\NN},(\lr_n)_{n\in\NN}\big)\) is sufficient for a Turing machine to determine a suitable modulus of convergence, i.e., the corresponding pair \((\rkap, \lkap)\) is obsolete in respect thereof. In this sense \emph{Theorem \ref{thm:CompleteMonRepr}} is not a direct generalization of \emph{Lemma \ref{T1.4}}, since the respective Turing machine receives a quadruple \(
        \dw = \big( (\rrp_{n,m})_{n,m\in\NN}, 
                    \rkapp~, 
                    (\lrp_{n,m})_{n,m\in\NN}, 
                    \lkapp~\big) \in \W
        \) as part of its input. In particular, the pair \((\rkapp~,\lkapp~)\) is essential in the proof of \emph{Theorem \ref{thm:CompleteMonRepr}}. This significant difference emerges from the mathematical strength of the representation of \(x_*\) in terms of computable sequences of rational numbers. If the assumptions made in \emph{Lemma \ref{T1.4}} are weakened even slightly, the corresponding proof becomes invalid. This phenomenon will be addressed further in \emph{Section \ref{comments}}.
    \end{Remark}

    Consider a family of pairs \(((\rrmn)_{n\in\NN}, (\lrmn)_{n\in\NN})_{m\in\NN}\) that satisfy the requirements of \emph{Lemma \ref{T1.4}} and each of which has their own limit value \(x_*^{m}\). In information theory, this may correspond to a specific channel model that features a model parameter \(m \in \NN\). From a practical point of view, the statement of the existence of a modulus of convergence for each pair \(((\rrmn)_{n\in\NN}, (\lrmn)_{n\in\NN})_{m\in\NN}\) is only usable because this modulus of convergence exhibits a construction algorithm that is uniformly recursive in \(((\rrmn)_{n\in\NN}, (\lrmn)_{n\in\NN})_{m\in\NN}\) (in a certain sense, we want the mapping \(m \mapsto x_*^m\) to be \emph{Turing computable}). This highlights the necessity of the \emph{converse} in the sense of the information-theoretic achievability and converse approach. Following our previous considerations, it may not be surprising that without the converse, it is generally not possible to determine the modulus of convergence of a monotonically non-decreasing sequence in a recursive way.

    \begin{Theorem}\label{T1.5}
        There exists a family \((\rrmn)_{n,m\in\NN}\) of computable sequences of rational numbers that simultaneously satisfies the following:
        \begin{itemize}
            \item 
                The family \((\rrmn)_{n,m\in\NN}\) is recursive, i.e, there exists a computable double sequence \((\rrp_{m,n})_{n,m\in\NN}\) of rational numbers that satisfies \(\rrp_{m,n} = \rrmn\) for all \(m,n\in\NN\).
            \item 
                For all \(m\in\NN\), the sequence \((\rrmn)_{n\in\NN}\) is non-negative as well as monotonically non-decreasing in \(n\) and there exists \(x_*^{m} \in \RR_c\) such that \(\lim_{n\to\infty} \rrmn = x_*^{m}\) holds true.
            \item 
                There does \emph{not} exist a Turing machine \(TM : \NN \times \NN \rightarrow \NN\) such that \(|x_*^{m} - \rrmn| < \sfrac{1}{2^M}\) holds true for all \(n,m,M\in\NN\) that satisfy \(n \geq TM(m, M)\).
        \end{itemize}
    \end{Theorem}\begin{proof} 
        We prove the statement by contradiction. Let \(A\subset \NN\) be a recursively enumerable, non-recursive set and suppose \(TM\) does exist. Since \(A\) is recursively enumerable, there exists a recursive bijection \(f_A:\NN \rightarrow A\). For all \(n\in\NN\), define 
        \begin{align}
            A_n :           &=  \big\{f_A(j) : 1 \leq j \leq n\big\} \\
            \rrmn :    &=  \left\{\begin{array}{ll} 1 & m\in A_n\\ 0& \text{otherwise} \end{array} \right. .
        \end{align}
        Then for all \(n,m\in\NN\), we have \(A_n \subset A_{n+1}\) and thus \(0 \leq \rrmn \leq \rrmnn\). Furthermore, \((\rrmn)_{n,m\in\NN}\) is a recursive family of computable sequences of rational numbers that satisfies 
        \begin{align}
            \lim_{n\to\infty} \rrmn = \mathds{1}_A(m)  
        \end{align}
        for all \(m\in\NN\). Set \(\mathds{1}_A(m) =: x_*^{m}\) for all \(m\in\NN\). By construction, we have
        \begin{align}
            \big|x_*^{m} - \rrmn\big| = \big|\mathds{1}_A(m)  - \rrmn\big|\in \{0,1\}
        \end{align}
        for all \(n,m\in\NN\). On the other hand, by definition of \(TM\), we have 
        \begin{align}
            \left|x_*^{m} - \rrmx{TM(m,1)}\right| =
            \left|\mathds{1}_A(m) -  \rrmx{TM(m,1)}\right| < \frac{1}{2}
        \end{align}
        for all \(n,m\in\NN\). Thus, for all \(m\in\NN\), we obtain
        \begin{align}
            \mathds{1}_A(m) =  \left\{\begin{array}{ll} 1 & \text{if}~ \rrmx{TM(m,1)} = 1 \\ 0& \text{otherwise} \end{array} \right. .
        \end{align}
        Consequently, \(\mathds{1}_A(m)\) is a recursive function, which contradicts the assumption of \(A\) being non-recursive.
    \qed\end{proof}

    In \emph{Section \ref{effect}} we will discuss whether Fekete's lemma, in analogy to \emph{Theorem \ref{thm:CompleteMonRepr}}, allows for the recursive construction of a suitable modulus of convergence for the sequence under consideration. \emph{Theorem \ref{T1.5}} will be essential in providing a negative answer to this question.

\section{Fekete's Lemma in the Context of the Arithmetical Hierarchy of Real Numbers}       % 
\label{const}                                                                               %
% ----------------------------------------------------------------------------------------- %
 
    As indicated in \emph{Section \ref{sec:Intr}}, Fekete's lemma may prevent the general necessity of the information-theoretic converse in a certain sense: given a superadditive sequence \((\ra_n)_{n\in\NN}\) of real numbers, it provides the existence of the limit value \(\lim_{n\to\infty} \ra_n\cdot n^{-1}\). On the other hand, as we have seen in \emph{Section \ref{converge}}, the existence of a suitable converse does not only supply the existence of the limit value, but also yields a recursive construction for the modulus of convergence. As we will see in the following, this does not hold true for Fekete's lemma. In fact, the corresponding limit value may not even be a computable number. On a metamathematical level, this result implies that there cannot exist a constructive proof for Fekete's lemma, since, by definition, that would allow for an effective construction of the associated limit value.

    Throughout this section, we place Fekete's lemma in the context of the arithmetical hierarchy of real numbers. That is, we define the sets \(\Lb\) and \(\Lp\), which consist of real numbers that can be characterized by sub- and superadditive sequences and show their equivalence to the sets \(\Pi_1\) and \(\Sigma_1\). The general ineffectiveness of convergence related to Fekete's lemma then follows as a corollary.

    The following lemma is a basic result by \emph{\citeauthor{ZhWe01}}, characterizing the sets \(\Sigma_1\) and \(\Pi_1\) through the suprema and infima of computable sequences of rational numbers.

    \begin{Lemma}[\citeauthor{ZhWe01} \cite{ZhWe01}]\label{lem:WeihrauchSUP}\mbox{}
        \begin{itemize}
            \item[1.] 
                A real number \(x_*\) satisfies \(x_* \in \Sigma_1\) if and only if there exists a computable sequence \((\rr_n)_{n\in\NN}\) of rational numbers such that \(\sup_{n\in\NN} \rr_n = x_*\) holds true.
            \item[2.] 
                A real number \(x_*\) satisfies \(x_* \in \Pi_1\) if and only if there exists a computable sequence \((\lr_n)_{n\in\NN}\) of rational numbers such that \(\inf_{n\in\NN} \lr_n = x_*\) holds true.
        \end{itemize}
    \end{Lemma}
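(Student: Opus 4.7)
The plan is to handle each of the two parts by symmetry, so that (2) reduces to (1) via the substitutions $\sup\leftrightarrow \inf$, non-decreasing $\leftrightarrow$ non-increasing, and $\Sigma_1\leftrightarrow\Pi_1$; hence it suffices to work out part~(1) carefully.

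The forward direction of part~(1) is essentially an unpacking of the definition of $\Sigma_1$ given in \emph{Definition~\ref{def:PiSigma}}: if $x_*\in\Sigma_1$, then by assumption there exists a computable sequence $(\rr_n)_{n\in\NN}$ of rational numbers with $\rr_n\leq\rr_{n+1}$ for all $n\in\NN$ and $\lim_{n\to\infty}\rr_n = x_*$. Monotonic convergence of a non-decreasing sequence to a finite limit forces $\sup_{n\in\NN}\rr_n = x_*$, and this same sequence is the witness required by the statement.

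For the reverse direction, the key step is to convert an arbitrary computable sequence of rationals into a monotonically non-decreasing one with the same supremum, using a running maximum. Given $(\rr_n)_{n\in\NN}$ computable with $\sup_{n\in\NN}\rr_n = x_* \in \RR$, I would set $\rr'_n := \max\{\rr_k : 1\leq k\leq n\}$. The sequence $(\rr'_n)_{n\in\NN}$ is computable, because a Turing machine can evaluate $\rr_1,\dots,\rr_n$ via the recursive functions $f_{\mathrm{si}},f_{\mathrm{nu}},f_{\mathrm{de}}$ supplied by \emph{Definition~\ref{ber}} and then select the largest of these finitely many rationals. By construction $\rr'_n\leq\rr'_{n+1}$ and $\sup_{n\in\NN}\rr'_n = \sup_{n\in\NN}\rr_n = x_*$; since a monotonically non-decreasing sequence bounded above converges to its supremum, $\lim_{n\to\infty}\rr'_n = x_*$, exhibiting $x_*\in\Sigma_1$.

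Part~(2) will then follow verbatim by replacing maxima by minima, suprema by infima, and reversing every monotonicity. There is no substantive obstacle in this argument; the only subtlety worth flagging is that the statement tacitly restricts to $x_*\in\RR$, since $\Sigma_1,\Pi_1\subset\RR$ by definition and contain no infinite values. If one instead admitted sequences whose supremum (resp.\ infimum) could equal $+\infty$ (resp.\ $-\infty$), these divergent cases would have to be explicitly excluded on the right-hand side of the equivalence.
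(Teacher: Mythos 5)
Your argument is correct, but there is no in-paper proof to compare it against: the paper states this lemma purely as a quoted result of Zheng and Weihrauch \cite{ZhWe01} and immediately moves on, so the proof is delegated entirely to the citation. What you have written is the standard argument behind that cited characterization, and it is sound. The forward direction is, as you say, just an unpacking of \emph{Definition \ref{def:PiSigma}}, since a non-decreasing sequence converging to \(x_*\) has \(x_*\) as its supremum. For the reverse direction, the running maximum \(\rr'_n := \max\{\rr_k : 1\leq k\leq n\}\) is indeed a computable sequence of rational numbers (comparisons of rationals presented via \(f_{\mathrm{si}},f_{\mathrm{nu}},f_{\mathrm{de}}\) from \emph{Definition \ref{ber}} are recursive), it is monotonically non-decreasing with the same finite supremum \(x_*\), and hence converges to \(x_*\) from below, giving \(x_*\in\Sigma_1\); this running-max device is precisely the one the paper itself deploys later (e.g.\ in the proofs of \emph{Theorem \ref{thm:yannik}} and \emph{Lemma \ref{lem:SigmaSUBSET}}), so your proof is fully in the spirit of the surrounding material. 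Your closing caveat is also apt: the statement presupposes \(x_*\in\RR\), so the case \(\sup_{n\in\NN}\rr_n=+\infty\) (resp.\ \(\inf_{n\in\NN}\lr_n=-\infty\)) is excluded by hypothesis rather than by argument, and the symmetric treatment of part~2 is unproblematic.
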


    Note that we do not require the sequences \((\rr_n)_{n\in\NN}\) and \((\lr_n)_{n\in\NN}\) to be monotonic in this context.

    Subsequently, we will make use of a generalized version of \emph{Lemma \ref{lem:WeihrauchSUP}}: the established characterization of \(\Sigma_1\) and \(\Pi_1\) extends to the suprema and infima of computable sequences of computable numbers, which we will prove in the following.

    \begin{Lemma}\label{lem:SigmaCSCN}\mbox{}
        \begin{itemize}
            \item[1.] 
                A real number \(x_*\) satisfies \(x_* \in \Sigma_1\) if and only if there exists a computable sequence \((\rz_n)_{n\in\NN}\) of computable numbers such that \(\sup_{n\in\NN} \rz_n = x_*\) holds true.
            \item[2.] 
                A real number \(x_*\) satisfies \(x_* \in \Pi_1\) if and only if there exists a computable sequence \((\lz_n)_{n\in\NN}\) of computable numbers such that \(\inf_{n\in\NN} \lz_n = x_*\) holds true. 
        \end{itemize}
    \end{Lemma}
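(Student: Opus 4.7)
The plan is to reduce Lemma \ref{lem:SigmaCSCN} to the rational-sequence version (\emph{Lemma \ref{lem:WeihrauchSUP}}) in both directions. I will treat part~1 in full and only remark on the symmetry that yields part~2.

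For the ``only if'' direction of part~1, assume \(x_* \in \Sigma_1\). By \emph{Lemma \ref{lem:WeihrauchSUP}} there exists a computable sequence \((\rr_n)_{n\in\NN}\) of rational numbers with \(\sup_{n\in\NN}\rr_n = x_*\). Every rational number is a computable number, and any computable sequence of rationals \((\rr_n)_{n\in\NN}\) trivially yields a computable sequence of computable numbers in the sense of \emph{Definition \ref{def:CSCN}}: set \(r_{n,m} := \rr_n\) and \(\kappa(n,M) := 1\). Hence \((\rr_n)_{n\in\NN}\) itself serves as the required \((\rz_n)_{n\in\NN}\).

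For the ``if'' direction, let \((\rz_n)_{n\in\NN}\) be a computable sequence of computable numbers with standard description \(\big((r_{n,m})_{n,m\in\NN},\kappa\big)\) such that \(\sup_{n\in\NN}\rz_n = x_*\). The idea is to construct a single computable sequence of rational numbers whose supremum is \(x_*\) and then invoke \emph{Lemma \ref{lem:WeihrauchSUP}}. Fix a recursive bijection \(\pi : \NN \to \NN\times\NN\), write \(\pi(k) = (n_k, m_k)\), and set
\begin{equation}
\rr'_k \;:=\; r_{n_k,\kappa(n_k,m_k)} \;-\; \frac{1}{2^{m_k}}.
\end{equation}
Since \(r_{n,m}\) and \(\kappa\) are recursive and \(\pi\) is recursive, \((\rr'_k)_{k\in\NN}\) is a computable sequence of rational numbers. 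From \(\bigl|\rz_{n_k} - r_{n_k,\kappa(n_k,m_k)}\bigr| < 2^{-m_k}\) we get \(\rr'_k < \rz_{n_k} \leq x_*\), so \(\sup_k \rr'_k \leq x_*\). Conversely, for any fixed \(n\in\NN\) and any \(m\in\NN\) the pair \((n,m)\) is hit by \(\pi\) at some index \(k\), and then \(\rr'_k > \rz_n - 2^{1-m}\); letting \(m\to\infty\) shows \(\sup_k \rr'_k \geq \rz_n\), hence \(\sup_k \rr'_k \geq \sup_n \rz_n = x_*\). Applying \emph{Lemma \ref{lem:WeihrauchSUP}} to \((\rr'_k)_{k\in\NN}\) yields \(x_* \in \Sigma_1\).

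Part~2 follows by the same argument after replacing \(\sup\) by \(\inf\) and the shift \(-\,2^{-m_k}\) by \(+\,2^{-m_k}\); this guarantees \(\lr'_k > \lz_{n_k} \geq x_*\) from above, while still allowing approximation of each \(\lz_n\) to arbitrary precision. There is no real obstacle here: the content of the lemma is essentially bookkeeping plus the standard device of absorbing the uncertainty \(2^{-m}\) of the rational approximations into the sign of the shift so that the approximations stay on the correct side of \(\rz_n\) (resp.\ \(\lz_n\)). The only point that deserves care is to check that the shifted values still approximate \(\rz_n\) (resp.\ \(\lz_n\)) from the intended side tightly enough for the supremum (resp.\ infimum) to be preserved, which is the small calculation above.
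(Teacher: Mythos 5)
Your proof is correct and takes essentially the same route as the paper: the trivial direction by viewing a computable sequence of rationals as a computable sequence of computable numbers, and the converse by running over all pairs \((n,m)\) via a pairing bijection and shifting the rational approximations \(r_{n,\kappa(n,m)}\) down (resp.\ up) by \(2^{-m}\) so that they stay below the supremum (resp.\ above the infimum) while still approaching each \(\rz_n\), then invoking Lemma~\ref{lem:WeihrauchSUP}. No substantive difference from the paper's argument.
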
\begin{proof}\mbox{}
        \begin{itemize}
            \item[1.] 
                As every computable sequence of rational numbers is also a computable sequence of computable numbers, the implication immediately follows from \emph{Lemma \ref{lem:WeihrauchSUP}}.
            
                For the converse, consider a computable sequence \((\rz_n)_{n\in\NN}\) of computable numbers that satisfies \(\sup_{n\in\NN}\rz_n = x_*\) and a standard description \[\big((\rr_{n,m})_{n,m\in\NN},\rkap\big)\] thereof. Furthermore, consider the inverse of the \emph{Cantor pairing function} \(n \mapsto (\pi_1(n), \pi_2(n)) \in \NN\times\NN\). The Cantor pairing function \(\pi: \NN\times\NN \rightarrow \NN\) is a total and bijective recursive function, as is its inverse. Define the computable sequence \((\rrp_n)_{n\in\NN}\) of rational numbers by setting
                \begin{align}
                    \rrp_n := \rr_{\pi_1(n), \rkap(\pi_1(n),\pi_2(n))} - \frac{1}{2^{\pi_2(n)}}
                \end{align}
                for all \(n\in\NN\). Then, we have
                \begin{align}
                    \rrp_n \leq \rz_{\pi_1(n)} \leq \sup_{m\in\NN} \rz_m = x_*
                \end{align}
                for all \(n\in\NN\), i.e., \(\sup_{n\in\NN} \rrp_n \leq x_*\). It remains to show that \(\sup_{n\in\NN} \rrp_n = x_*\) holds true. Let \( 0 < \epsilon < 1\) be arbitrary. Since we have \(\sup_{n\in\NN} \rz_n = x_*\) as well as
                \begin{align}
                    \lim_{m\to\infty} \left(\rr_{n,\rkap(n,m)} - \frac{1}{2^m}\right) = \nu_n 
                \end{align}
                for all \(n\in\NN\), there exist \(l,k\in\NN\) such that
                \begin{align}
                    x_* - \epsilon \leq \rr_{l,\rkap(l,k)} - \frac{1}{2^{k}} \leq x_* 
                \end{align}
                holds true. Furthermore, there exists \(n\in\NN\) such that \((\pi_1(n), \pi_2(n)) = (l,k)\) is satisfied. Therefore, there exists \(n\in\NN\) such that 
                \begin{align}
                    x_* - \epsilon \leq \rrp_{n} \leq x_*    
                \end{align}
                holds true. Since \(\epsilon\) was chosen arbitrarily, we have \(\sup_{n\in\NN} \rrp_n = x_*\). Hence, \(x_*\in\Sigma_1\) follows by \emph{Lemma \ref{lem:WeihrauchSUP}}.
            \item[2.] 
                The proof of the second claim follows along the same line of reasoning as the proof of the first claim. Thus, we restrict ourselves to a brief summary of the steps. 
            
                Again, as every computable sequence of rational numbers is also a computable sequence of computable numbers, the implication immediately follows from \emph{Lemma \ref{lem:WeihrauchSUP}}.
                
                For the converse, consider a standard description \(\big((\lr_{n,m})_{n,m\in\NN},\lkap\big)\) of a suitable computable sequence of computable numbers and define
                \begin{align}
                    \lrp_n := \lr_{\pi_1(n), \sublkap(\pi_1(n),\pi_2(n))} + \frac{1}{2^{\pi_2(n)}}
                \end{align}
                for all \(n\in\NN\). The computable sequence \((\lrp_n)_{n\in\NN}\) of rational numbers then satisfies \(\inf_{n\in\NN} \lrp_n = x_*\). Hence, \(x_* \in \Pi_1\) follows by \emph{Lemma \ref{lem:WeihrauchSUP}}.
        \end{itemize}
    \qed\end{proof}

    As indicated in \emph{Section \ref{sec:Intr}}, the arithmetical hierarchy by \emph{\citeauthor{ZhWe01}} closely relates to the Kleene–Mostowski hierachy of sets of natural numbers. In view of the denomination "Kleene–Mostowski hierachy," we will indentify the characterization of real numbers by means of the suprema and infima of computable sequences of computable numbers by the term "(\(n\)-th order) Zheng-Weihrauch representation."

    We refer to a bounded, computable sequence \((\rz_n)_{n\in\NN}\) of computable numbers as a \emph{lower first-order Zheng-Weihrauch (ZW) representation} of the real number \(x_* = \sup_{n\in\NN} \rz_n\). The set of standard descriptions  \(\vec{\dz} = \big((\rr_{n,m})_{n,m\in\NN},\rkap\big)\) of lower first-order ZW representations of real numbers is denoted by \(\vec{\Z}\). 

    Likewise, we refer to a bounded, computable sequence \((\lz_n)_{n\in\NN}\) of computable numbers as an \emph{upper first-order Zheng-Weihrauch (ZW) representation} of the real number \(x_* = \inf_{n\in\NN} \lz_n\). The set of standard descriptions \(\cev{\dz} = \big((\lr_{n,m})_{n,m\in\NN},\lkap\big)\) of upper first-order ZW representations of real numbers is denoted by \(\cev{\Z}\).

    Ultimately, our goal is to introduce a characterization of (semi-)computable numbers through super- and subadditive sequences, which subsequently allows us to relate Fekete's lemma to the arithmetical hierarchy of real numbers. Thus, in analogy to \(\Pi_1\) and \(\Sigma_1\), define the sets \(\Lp \subseteq \RR\) and \(\Lb \subseteq \RR\) as follows:
    \begin{itemize}
        \item   $\Lb     := \big\{ x_* \in \RR ~:~$ There exists a \emph{subadditive}, computable sequence \((\la_n)_{n\in\NN}\) of computable numbers that satisfies
            $\lim_{n\to\infty} \la_n\cdot n^{-1} = x_*.\big\}$, %\label{eq:Lb}
    
        \item $\Lp     := \big\{ x_* \in \RR ~:~$   There exists a \emph{superadditive}, computable sequence \((\ra_n)_{n\in\NN}\) of computable numbers that satisfies $\lim_{n\to\infty} \ra_n\cdot n^{-1} = x_*.\big\}$. 
    \end{itemize}

    Similar to the notion of first-order ZW representations, we refer to a superadditive, computable sequence \((\ra_n)_{n\in\NN}\) of computable numbers that satisfies \(\lim_{n\to\infty} \ra_n \cdot n^{-1} = x_*\) for some real number \(x_*\) as a \emph{superadditive representation} of \(x_*\). The set of standard descriptions \(\vec{\da} = \big((\rr_{n,m})_{n,m\in\NN},\rkap\big)\) of superadditive representations of real numbers is denoted by \(\vec{\A}\).

    Likewise, we refer to a subadditive, computable sequence \((\la_n)_{n\in\NN}\) of computable numbers that satisfies \(\lim_{n\to\infty} \la_n \cdot n^{-1} = x_*\) for some real number \(x_*\) as a \emph{subadditive representation} of \(x_*\). The set of standard descriptions \(\cev{\da} = \big((\lr_{n,m})_{n,m\in\NN},\lkap\big)\) of subadditive representations of real numbers is denoted by \(\cev{\A}\).

    Given the definition of the sets \(\Lp\) and \(\Lb\) as well as \emph{Lemma \ref{lem:SigmaCSCN}}, we can immediately prove the following relation to the sets \(\Pi_1\) and \(\Sigma_1\):

    \begin{Lemma}\label{lem:LsupLsubSUBSET}
        For all real numbers \(x_*\), the following holds true:
        \begin{itemize}
            \item[1.] 
                If there exists a superadditive representation for \(x_*\), then there exists a lower first-order ZW representation for \(x_*\). Consequently, we have \(\Lp \subseteq \Sigma_1\).
            \item[2.] 
                If there exists a subadditive representation for \(x_*\), then there exists an upper first-order ZW representation for \(x_*\). Consequently, we have \(\Lb \subseteq \Pi_1\).
        \end{itemize}
    \end{Lemma}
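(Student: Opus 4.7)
The plan is to exploit Fekete's lemma directly and then appeal to the characterization of $\Sigma_1$ and $\Pi_1$ just proved in \emph{Lemma \ref{lem:SigmaCSCN}}. Given a superadditive representation $(\ra_n)_{n\in\NN}$ of $x_* \in \Lp$, Fekete's lemma (part 1) yields
\begin{align}
    \sup_{n\in\NN} \frac{\ra_n}{n} = \lim_{n\to\infty} \frac{\ra_n}{n} = x_*.
\end{align}
Hence the natural candidate for a lower first-order ZW representation of $x_*$ is the sequence $\rz_n := \ra_n / n$, and once I show this is a bounded, computable sequence of computable numbers, \emph{Lemma \ref{lem:SigmaCSCN}} immediately gives $x_* \in \Sigma_1$. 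The subadditive case is symmetric: set $\lz_n := \la_n / n$, use part 2 of Fekete's lemma and part 2 of \emph{Lemma \ref{lem:SigmaCSCN}}.

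To verify that $(\rz_n)_{n\in\NN}$ is a computable sequence of computable numbers in the sense of \emph{Definition \ref{def:CSCN}}, I take any standard description $\big((r_{n,m})_{n,m\in\NN},\kappa\big)$ of $(\ra_n)_{n\in\NN}$ and form the double sequence $r'_{n,m} := r_{n,m}/n$, keeping the same modulus $\kappa$. Since $n \geq 1$, the bound $|\ra_n - r_{n,m}| < 1/2^M$ that holds for $m \geq \kappa(n,M)$ yields $|\rz_n - r'_{n,m}| = |\ra_n - r_{n,m}|/n < 1/2^M$ as required. Boundedness from below comes from iterating superadditivity: by induction one obtains $\ra_n \geq n\ra_1$ for every $n\in\NN$, so $\rz_n \geq \ra_1$; and boundedness from above is free because the supremum is the finite value $x_* \in \RR$ (recall that $x_* \in \Lp \subseteq \RR$ by definition of $\Lp$). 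The dual bounds $\lz_n \leq \la_1$ and $\lz_n \geq x_*$ in the subadditive case follow in exactly the same manner.

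None of the steps above is expected to present a genuine difficulty; the substantive ingredients (Fekete's lemma itself and the upgrade of \emph{Lemma \ref{lem:WeihrauchSUP}} to computable sequences of computable numbers) are already in hand, and the remainder is essentially bookkeeping to check that passing from $\ra_n$ to $\ra_n/n$ preserves the relevant effective structure. The only point that deserves care is uniformity of the standard description — ensuring that the construction $((r_{n,m}),\kappa) \mapsto ((r_{n,m}/n),\kappa)$ is itself realized by a Turing machine — but this is immediate because division of a rational by a natural number is a recursive operation. Consequently, the two inclusions $\Lp \subseteq \Sigma_1$ and $\Lb \subseteq \Pi_1$ follow at once.
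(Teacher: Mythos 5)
Your proposal is correct and follows essentially the same route as the paper: set $\rz_n := \ra_n\cdot n^{-1}$ (resp. $\lz_n := \la_n\cdot n^{-1}$), invoke Fekete's lemma to identify $x_*$ with $\sup_n \rz_n$ (resp. $\inf_n \lz_n$), and conclude via Lemma \ref{lem:SigmaCSCN}. Your additional checks — that the standard description transforms recursively under division by $n$ and that superadditivity gives the bound $\rz_n \geq \ra_1$ ensuring boundedness — are details the paper leaves implicit, and they are verified correctly.
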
\begin{proof}\mbox{}
        \begin{itemize}
            \item[1.] 
                Consider a superadditive representation \((\ra_n)_{n\in\NN}\) for \(x_*\) and set \(\rz_n := \ra_n \cdot n^{-1}\) for all \(n\in\NN\). Following \emph{Lemma \ref{lem:fekete}}, we conclude that 
                \begin{align}
                    x_* = \lim_{n\to\infty} \ra_n \cdot n^{-1} = \sup_{n\in\NN} \rz_n
                \end{align} 
                holds true. Thus, \((\rz_n)_{n\in\NN}\) is a first-order ZW representation of \(x_*\), and, by \emph{Lemma \ref{lem:SigmaCSCN}}, we have \(x_* \in \Sigma_1\).
            \item[2.] 
                Consider a subadditive representation \((\la_n)_{n\in\NN}\) for \(x_*\) and set \(\lz_n := \la_n \cdot n^{-1}\) for all \(n\in\NN\). Following \emph{Lemma \ref{lem:fekete}}, we conclude that 
                \begin{align}
                    x_* = \lim_{n\to\infty} \la_n \cdot n^{-1} = \inf_{n\in\NN} \lz_n
                \end{align} 
                holds true. Thus, \((\lz_n)_{n\in\NN}\) is a first-order ZW representation of \(x_*\), and, by \emph{Lemma \ref{lem:SigmaCSCN}}, we have \(x_* \in \Pi_1\).
        \end{itemize}
    \qed\end{proof}

    Observe that the proof of \emph{Lemma \ref{lem:LsupLsubSUBSET}} is constructive. Hence, we can transform a super(sub)additive representation of a number \(x_* \in\RR\) effectively into a lower (upper) first-order ZW representation of \(x_*\). As we will prove in the following, the converse is true as well.

    \begin{Lemma}\label{lem:SigmaSUBSET}
        There exists a Turing machine \(TM : \vec{\Z} \rightarrow \vec{\A}\) that satisfies the following:
        \begin{itemize}
            \item 
                If the input of \(TM\) is the standard description of a lower first-order ZW representation of a number \(x_*\), then the output of \(TM\) is the standard description of a superadditive representation of \(x_*\).
        \end{itemize}
        Consequently, we have \(\Sigma_1 \subseteq \Lp\).
    \end{Lemma}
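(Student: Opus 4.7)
The plan is to convert the supremum representation into a superadditive one via a simple linear-rescaling of a regularized version of $(\rz_n)_{n\in\NN}$. Given a standard description $\vec{\dz} = \big((\rr_{n,m})_{n,m\in\NN}, \rkap\big)$ of a lower first-order ZW representation $(\rz_n)_{n\in\NN}$ of $x_*$, the first step is to replace $(\rz_n)_{n\in\NN}$ by its running maximum $\rz_n' := \max\{\rz_1, \ldots, \rz_n\}$. Since the maximum of two computable numbers is a uniformly computable operation via $\max(a,b) = (a+b+|a-b|)/2$ and the effective field structure recalled in \emph{Remark \ref{rem:RcEffectiveField}}, $(\rz_n')_{n\in\NN}$ is itself a computable sequence of computable numbers, and by construction it is monotonically non-decreasing. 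The inequalities $\rz_n \leq \rz_n' \leq \sup_{k\in\NN}\rz_k = x_*$ yield $\sup_{n\in\NN}\rz_n' = x_*$, which is, by monotonicity, also the limit of $(\rz_n')_{n\in\NN}$.

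Next, define $\ra_n := n\cdot\rz_n'$ for every $n\in\NN$; then $\ra_n/n = \rz_n' \to x_*$, which is the limit relation required of a superadditive representation. Superadditivity follows immediately from the monotonicity of $(\rz_n')_{n\in\NN}$: since $\rz_{n+m}' \geq \rz_n'$ and $\rz_{n+m}' \geq \rz_m'$, we have
\begin{align}
\ra_{n+m} = (n+m)\,\rz_{n+m}' = n\,\rz_{n+m}' + m\,\rz_{n+m}' \geq n\,\rz_n' + m\,\rz_m' = \ra_n + \ra_m.
\end{align}

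The last step is to build a standard description $\vec{\da} \in \vec{\A}$ of $(\ra_n)_{n\in\NN}$ effectively from $\vec{\dz}$. I would put $s_{n,m} := n\cdot\max\{\rr_{i,m} : 1\leq i\leq n\}$, which is a computable double sequence of rational numbers, and take $\kappa'(n,M) := \max\{\rkap(i,\,M + \lceil\log_2 n\rceil + 1) : 1\leq i\leq n\}$ as the corresponding modulus. A routine triangle-inequality estimate then yields $|s_{n,m} - \ra_n| < 1/2^M$ whenever $m \geq \kappa'(n,M)$. Since each operation above is uniformly recursive in $\vec{\dz}$, a Turing machine $TM : \vec{\Z} \rightarrow \vec{\A}$ of the desired form exists, and the inclusion $\Sigma_1 \subseteq \Lp$ follows.

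There is no real conceptual obstacle: the supremum representation already encodes the correct limit, and multiplying a monotonically non-decreasing sequence of computable numbers by $n$ manufactures superadditivity for free. The only mildly technical point is the book-keeping of the modulus of convergence after multiplication by $n$, which is handled by the standard logarithmic correction absorbed into $\kappa'$.
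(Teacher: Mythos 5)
Your proposal is correct and takes essentially the same route as the paper: a running-maximum regularization to obtain a monotone sequence whose supremum (and hence limit) is \(x_*\), followed by multiplication by \(n\), with superadditivity derived from monotonicity by exactly the same computation, and the whole construction uniformly recursive in the input description. The only deviation is bookkeeping: the paper first passes to rational underestimates \(\rr_{\pi_1(m),\rkap(\pi_1(m),\pi_2(m))} - 2^{-\pi_2(m)}\) via the Cantor pairing (reusing \emph{Lemma \ref{lem:SigmaCSCN}}), so its output double sequence is constant in \(m\) with a trivial modulus, whereas you keep the computable numbers \(\rz_i\) themselves and absorb the factor \(n\) into the modulus through the \(\lceil\log_2 n\rceil\) correction — both yield a valid element of \(\vec{\A}\).
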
\begin{proof}
        Denote again by \(n \mapsto (\pi_1(n), \pi_2(n)) \in \NN\times\NN\) the inverse of the Cantor pairing function. Let \(\big((\rr_{n,m})_{n,m\in\NN},\rkap\big)\) be the standard description of a lower first-order ZW representation \((\rz_n)_{n\in\NN}\) of \(x_*\). We define
        \begin{equation}
            \rrp_n := \max  \bigg\{\rr_{\pi_1(m), \rkap(\pi_1(m),\pi_2(m))} - \frac{1}{2^{\pi_2(m)}}~:
                             1 \leq m \leq n\bigg\}
        \end{equation}
        for all \(n\in\NN\). Following the line of reasoning presented in the proof of \emph{Lemma \ref{lem:SigmaCSCN}}, we conclude that \((\rrp_n)_{n\in\NN}\) is a monotonically non-decreasing, computable sequence of rational numbers that satisfies \(\lim_{n\to\infty} \rrp_n = \sup_{n\in\NN} \rrp_n = \sup_{n\in\NN} \rz_n = x_*\) and that the mapping \(\big((\rr_{n,m})_{n,m\in\NN},\rkap\big) \mapsto (\rrp_n)_{n\in\NN}\) is recursive. Observe that the sequence \((n \cdot \rrp_n)_{n\in\NN}\) satisfies superadditivity, since
        \begin{align}
                (n+m) \cdot \rrpx{n+m}  &= n \cdot \rrpx{n+m} + m \cdot \rrpx{n+m}\nonumber\\ 
                                        &\geq n \cdot \rrp_{n} + m \cdot \rrp_{m} 
        \end{align}
        holds true for all \(n,m\in\NN\), following the monotonicity of \((\rrp_n)_{n\in\NN}\).
        
        Now consider the computable double sequence \((\rrppx{n,m})_{n,m\in\NN}\) of rational numbers defined via
        \begin{align}
            \rrppx{n,m} := n \cdot \rrp_n  
        \end{align}
        for all \(n,m\in\NN\), as well as the recursive function \(\rkapp~: \NN\rightarrow \NN,~ M \mapsto 1\). Clearly, \((\rrppx{n,m})_{n,m\in\NN}\) satisfies \(\lim_{m\to\infty} \rrppx{n,m} = n \cdot \rrp_n \) for all \(n\in\NN\), as well as 
        \begin{align}
            \left|n \cdot \rrp_n - \rrppx{n,m} \right| = 0 < \frac{1}{2} = \frac{1}{2^{\rkapp~(M)}}
        \end{align}
        for all \(n,m,M\in\NN\) that satisfy \(m \geq \rkapp(M)\). Thus, the pair \(\big((\rrppx{n,m})_{n,m\in\NN},\rkapp\hspace{3.5pt}\big)\) is the standard description of a superadditive representation of the number \(x_*\) and the mapping \(\big((\rr_{n,m})_{n,m\in\NN},\rkap\big) \mapsto \big((\rrppx{n,m})_{n,m\in\NN},\rkapp\hspace{3.5pt}\big)\) is recursive.
    \qed\end{proof}

    \begin{Lemma}\label{lem:PiSUBSET}
        There exists a Turing machine \(TM : \cev{\Z} \rightarrow \cev{\A}\) that satisfies the following:
        \begin{itemize}
            \item 
                If the input of \(TM\) is the standard description of an upper first-order ZW representation of a number \(x_*\), then the output of \(TM\) is the standard description of a subadditive representation of \(x_*\).
        \end{itemize}
        Consequently, we have \(\Pi_1 \subseteq \Lb\).
    \end{Lemma}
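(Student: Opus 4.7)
The plan is to mirror the construction used in the proof of \emph{Lemma \ref{lem:SigmaSUBSET}}, with all inequalities reversed. Given a standard description \(\big((\lr_{n,m})_{n,m\in\NN},\lkap\big)\) of an upper first-order ZW representation \((\lz_n)_{n\in\NN}\) of \(x_*\), the goal is to effectively produce a subadditive representation of \(x_*\), i.e., a computable sequence \((\la_n)_{n\in\NN}\) of computable numbers with \(\la_{n+m} \leq \la_n + \la_m\) and \(\lim_{n\to\infty}\la_n\cdot n^{-1} = x_*\), together with its standard description.

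First, using the inverse \(n \mapsto (\pi_1(n), \pi_2(n))\) of the Cantor pairing function, I would define the computable sequence of rational numbers
\begin{align}
\lrp_n := \min\left\{\lr_{\pi_1(m), \sublkap(\pi_1(m),\pi_2(m))} + \frac{1}{2^{\pi_2(m)}} ~:~ 1 \leq m \leq n\right\}
\end{align}
for all \(n \in \NN\). By the same reasoning as in \emph{Lemma \ref{lem:SigmaCSCN}} (dualized), \((\lrp_n)_{n\in\NN}\) is monotonically non-increasing and satisfies \(\lim_{n\to\infty}\lrp_n = \inf_{n\in\NN}\lrp_n = \inf_{n\in\NN}\lz_n = x_*\), and the construction of \((\lrp_n)_{n\in\NN}\) from the input is uniformly recursive.

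Next, I would form the sequence \((n\cdot\lrp_n)_{n\in\NN}\) and verify subadditivity: for all \(n,m\in\NN\), monotonic non-increasingness of \((\lrp_n)_{n\in\NN}\) yields \(\lrp_{n+m} \leq \lrp_n\) and \(\lrp_{n+m} \leq \lrp_m\), so
\begin{align}
(n+m)\cdot\lrp_{n+m} = n\cdot\lrp_{n+m} + m\cdot\lrp_{n+m} \leq n\cdot\lrp_n + m\cdot\lrp_m.
\end{align}
By construction, \(\lim_{n\to\infty}(n\cdot\lrp_n)\cdot n^{-1} = \lim_{n\to\infty}\lrp_n = x_*\).

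Finally, I would package this as a standard description exactly as in the superadditive case: set \(\lrppx{n,m} := n\cdot\lrp_n\) for all \(n,m\in\NN\) (constant in \(m\)) and let \(\lkapp~\colon\NN\to\NN\) be the constant recursive function \(M \mapsto 1\). Then \(\big((\lrppx{n,m})_{n,m\in\NN}, \lkapp\hspace{3.5pt}\big) \in \cev{\A}\) is the standard description of a subadditive representation of \(x_*\), and the whole mapping from \(\cev{\dz}\) to \(\cev{\da}\) is recursive, producing the desired Turing machine. Since no genuine new difficulty arises beyond correctly dualizing signs and inequalities, I expect the entire argument to go through verbatim; the only point requiring a moment of care is keeping the direction of the inequality in the subadditivity verification straight, which is handled by the non-increasing monotonicity of \((\lrp_n)_{n\in\NN}\).
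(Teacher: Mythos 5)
Your proposal is correct and follows essentially the same construction as the paper: the paper's proof likewise defines \(\lrppx{n,m} := n\cdot\min\{\lr_{\pi_1(l),\sublkap(\pi_1(l),\pi_2(l))} + 2^{-\pi_2(l)} : 1\leq l\leq n\}\) with the constant modulus \(\lkapp : M\mapsto 1\), dualizing Lemma \ref{lem:SigmaSUBSET}. Your explicit verification of subadditivity via the non-increasing monotonicity of \((\lrp_n)_{n\in\NN}\) is exactly the step the paper leaves as "the same line of reasoning."
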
\begin{proof} 
        The proof of the claim follows along the same line of reasoning as the proof of \emph{Lemma \ref{lem:SigmaSUBSET}}. Thus, we will restrict ourselves to a brief summary of the steps.
    
        Again, consider a standard description\linebreak \(\big((\lr_{n,m})_{n,m\in\NN},\lkap\big)\) of an upper first-order ZW representation of the number \(x_*\in\RR\). For all \(n,m\in\NN\), we define
        \begin{equation}
            \lrppx{n,m} := n\cdot \min  \bigg\{\lr_{\pi_1(l), \sublkap(\pi_1(l),\pi_2(l))} + \frac{1}{2^{\pi_2(l)}}~: 1 \leq l \leq n\bigg\}.
        \end{equation}
        Furthermore, set \(\lkapp~ : \NN\rightarrow \NN,~ M \mapsto 1\). Then the mapping \(\big((\lr_{n,m})_{n,m\in\NN},\lkap\big) \mapsto \big((\lrppx{n,m})_{n,m\in\NN},\lkapp\hspace{3.5pt}\big)\) is recursive and the pair \(\big((\lrppx{n,m})_{n,m\in\NN},\lkapp\hspace{3.5pt}\big)\) is a standard description of a subadditive representation of \(x_*\).
    \qed\end{proof}

    We conclude that we are likewise able to transform any lower (upper) first-order ZW representation effectively into a super(sub)additive representation of the same number, leading up to the main result of this section: 

    \begin{Theorem}\label{thm:LsubLsupRc}
        For the sets \(\Sigma_1\), \(\Lp\), \(\Pi_1\), \(\Lb\) and \(\RR_c\), the following equalities hold true:
        \begin{align}
                \Sigma_1    &= \Lp, \label{eq:thm:LsubLsupRc::I}\\
                \Pi_1       &= \Lb, \label{eq:thm:LsubLsupRc::II}\\
                \RR_c       &= \Lp \cap \Lb. \label{eq:thm:LsubLsupRc::III}
        \end{align}
    \end{Theorem}\begin{proof}
        The logical conjunction of \emph{Lemma \ref{lem:LsupLsubSUBSET}} and \emph{Lemma \ref{lem:SigmaSUBSET}} yields \eqref{eq:thm:LsubLsupRc::I}, while \eqref{eq:thm:LsubLsupRc::II} is the logical conjunction of \emph{Lemma \ref{lem:LsupLsubSUBSET}} and \emph{Lemma \ref{lem:PiSUBSET}}. As indicated in the beginning of \emph{Section \ref{converge}}, we have \(\RR_c = \Pi_1 \cap \Sigma_1\) \cite{ZhWe01}, which subsequently implies \eqref{eq:thm:LsubLsupRc::III}.
    \qed\end{proof}

    \begin{Corollary}\label{C1.1}
        There exists a superadditive, computable sequence \((\ra_n)_{n\in\NN}\) of computable numbers as well as a real number \(x_* \in \RR\setminus\RR_c\) such that \(\lim_{n\to\infty} \ra_n\cdot n^{-1} = c_*\) holds true.
    \end{Corollary}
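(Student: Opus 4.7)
The plan is to deduce this immediately from \emph{Theorem \ref{thm:LsubLsupRc}}, specifically from the equality $\Sigma_1 = \Lp$. Recall that at the beginning of \emph{Section \ref{converge}} it was stated (citing \cite{ZhWe01}) that $\RR_c \subsetneq \Sigma_1$, i.e., there exist lower semi-computable numbers which are not computable. The corollary then essentially asserts that this strict inclusion survives when we pass from lower first-order ZW representations to superadditive representations.

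Concretely, I would proceed as follows. First, I would fix any real number $x_* \in \Sigma_1 \setminus \RR_c$; the existence of such a number is standard, a canonical example being the Specker-type number obtained from a recursively enumerable, non-recursive set $A \subseteq \NN$ via a computable enumeration $f_A$, yielding e.g.\ the sum $\sum_{n\in\NN} 2^{-f_A(n)}$, which is the supremum of a computable sequence of rationals but fails to be computable, as a standard diagonal argument against any putative effective modulus of convergence would reduce membership in $A$ to a recursive predicate. Second, I would invoke equality~1 of \emph{Theorem \ref{thm:LsubLsupRc}}, namely $\Sigma_1 = \Lp$, to conclude $x_* \in \Lp$. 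Third, by the defining property of $\Lp$ in \eqref{eq:Lp}, this membership directly supplies a superadditive, computable sequence $(\ra_n)_{n\in\NN}$ of computable numbers with $\lim_{n\to\infty} \ra_n \cdot n^{-1} = x_*$, which is precisely what is claimed (the letter $c_*$ in the statement is evidently a typo for $x_*$).

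There is essentially no obstacle, since the nontrivial content has already been packaged into \emph{Lemma \ref{lem:SigmaSUBSET}} (which proves $\Sigma_1 \subseteq \Lp$ effectively, by lifting a monotone rational approximant $\rrp_n$ to the superadditive sequence $n \cdot \rrp_n$). The only thing to make explicit, if desired, is that the witness $x_* \in \Sigma_1 \setminus \RR_c$ exists at all; but this is a cited fact rather than something requiring new argument in this paper. Hence the corollary is a one-line consequence of \emph{Theorem \ref{thm:LsubLsupRc}} together with the well-known strict inclusion $\RR_c \subsetneq \Sigma_1$.
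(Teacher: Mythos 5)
Your proposal is correct and matches the paper's own proof essentially verbatim: the paper likewise takes $x_* \in \Sigma_1 \setminus \RR_c$ (cited from \cite{ZhWe01}) and concludes via Theorem \ref{thm:LsubLsupRc} and the definition of $\Lp$. Your added Specker-style witness is a nice concretization, and in fact the paper records the same idea as an alternative direct proof in the remark following the corollary.
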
\begin{proof}
        As indicated in the beginning of \emph{Section \ref{converge}}, we have \(\Sigma_1\setminus\RR_c \neq \emptyset\) \cite{ZhWe01}. The assertion then follows from \emph{Theorem \ref{thm:LsubLsupRc}} and the definition of the set \(\Lp\).
    \qed\end{proof}
 
    \begin{Remark}
        According to Specker \cite{Spe49}, we can find a non-negative, monotonically non-decreasing, computable sequence \((\rr_n)_{n\in\NN}\) of rational numbers such that the corresponding limit value \(x_* = \lim_{n\to\infty} \rr_n\) is \emph{not} a computable number. The sequence \((\rrp_n)_{n\in\NN}\), defined via \(\rrp_n := n\cdot \rr_n\) for all \(n\in\NN\) then yields an alternative, direct proof of \emph{Corollary \ref{C1.1}}.
    \end{Remark}

    %\begin{Remark} 
     %   As indicated by \emph{Remark \ref{rem:RcEffectiveField}}, the field properties of the set \(\RR_c\) are effective with respect to the standard description computable numbers. Likewise, the sets \(\Sigma_1\) and \(\Pi_1\) satisfy effective field properties with respect to first-order ZW representations \cite{ZhWe01}. As pointed out previously, \emph{Lemmata \ref{lem:SigmaSUBSET}} and \emph{\ref{lem:PiSUBSET}} as well as the proof of \emph{Lemma \ref{lem:LsupLsubSUBSET}} imply that a lower (upper) first-order ZW representation can be transformed effectively into a super(sub)additive representation of the same number, and vice versa. Thus, we conclude that the field properties of \(\Sigma_1\) and \(\Pi_1\), \(\Lp\) and \(\Lb\), respectively, are effective with respect to super- and subadditive representations as well.
    %\end{Remark}

    \begin{Remark}\label{RfooI}
        \emph{Corollary \ref{C1.1}} highlights the boundaries of the state-of-the-art methods in information theory. For the number \(x_*\), there cannot exist a monotonically non-increasing, computable sequence of computable numbers with limit value \(x_*\). This is due to the fact that \(x_*\) is the limit value of a monotonically non-decreasing computable sequence of rational numbers on the one hand, but an uncomputable number on the other. Thus, the approach of characterizing \(x_*\) by deriving a suitable converse (in the sense of information theory) is not possible in this case. Surprisingly, this phenomenon is \emph{not} exclusive to the information theoretic converse, as was shown in \cite{BoScPo20:2}: 
        
        Denote by \(\mathcal{CH}_{c}(\{0,1\},\{0,1\})\) the set of computable stochastic \(2 \times 2\) matrices.  There exists a \emph{compound channel} \(\V_* := \{V_s\}_{s\in\NN}\) with \(V_s \in \mathcal{CH}_{c}(\{0,1\},\{0,1\})\) for all \(s\in\NN\), such that the mapping \(K \mapsto \V_{K} := \{V_s\}_{s\in\{1,2,\ldots,K\}}\) is recursive and the capacity \(C(\V_*)\) exists, but is \emph{not} a computable number. On the other hand, the capacity \(C(\V_K)\) exists and is computable for all \(K\in\NN\), since \(C(\V_K)\) only contains a finite number of components. It was furthermore shown that \(\lim_{K\to\infty} C(\V_K) = C(\V_*)\) holds true, with \((C(\V_K))_{K\in\NN}\) being a monotonically non-increasing sequence. Thus, there cannot exist a computable sequence of computable numbers that converges towards \(C(\V_*)\) from below. Again, denote by \(M(n)\) the maximum number of messages transmittable by \(n\) uses of the channel. As indicated in the introduction, the existence of an optimal code for each blocklength \(n\) is straightforward, since the number of feasible codes is finite. Furthermore, for the channel \(\V_*\), we have \(\sfrac{1}{n}\cdot\log_2 M(n) \leq C(\V_*)\) for all \(n\in\NN\) (with respect to an error constraint of \(0 < \epsilon < \sfrac{1}{2}\)) and, by the existence of \(C(\V_*)\), \(\lim_{n\to\infty} \sfrac{1}{n}\cdot\log_2 M(n) = C(\V_*)\). Thus, the mapping \(n\mapsto M(n)\) is nonrecursive in this context. Consequently, the common information theoretic approach of finding optimal codes for an increasing number of channel uses \(n\) is not possible for the channel \(\V_*\).
    \end{Remark}

    \begin{Remark}\label{non}
        As indicated in the \emph{Section \ref{sec:Intr}}, Fekete's lemma provides an multi-letter description for the capacities associated with a number of channel models \cite{ABBN13,BBS13, WNB16}. In his Shannon lecture in 2008, Rudolf Ahlswede pointed out that a lot of research concerning single-letter descriptions of channel capacities has been conducted during the past 60 years. For many capacities of practical interest, however, it remains unclear whether a single-letter description does exist. In view of this problem, Ahlswede asked whether it would instead be possible to prove the effective convergence of multi-letter formulas towards the corresponding capacity. According to \emph{Definition \ref{defeff}}, it would then be possible to compute the numerical value of the capacity up to any desired accuracy. For many practical applications, this would be sufficient for the near future. As implicated by \emph{Corollary \ref{C1.1}}, Ahlswede's question cannot be answered positively on the basis of Fekete's lemma, since the corresponding limit value does not necessarily have to be a computable number. Furthermore, we can observe that the non-computability of the limit value occurs for the de Bruijn-Erdös condition as well (\cite{BE51,BE52}).
    \end{Remark}

\section{An Effective Version of Fekete's Lemma}        %
\label{effect}                                          %
% ----------------------------------------------------- %

    In this section, we prove an \emph{effective} version of Fekete's lemma for superadditive sequences that satisfy non-negativity. 

    Given a computable sequence \((r_n)_{n\in\NN}\) of rational numbers that converges towards a real number \(x_*\) with recursive modulus of convergence \(\kappa : \NN \rightarrow \NN\), we immediately have \(x_* \in \RR_c\), due to the definition of computable numbers. As indicated before, the converse is not true in general. Given a computable sequence \((r_n)_{n\in\NN}\) of rational numbers that converges towards a computable number \(x_*\), there may not exist a recursive modulus of convergence. A general description of this phenomenon may be found in \cite{Vy98}. In the following, we will show that the equivalence holds true for non-negative sequences in the context of Fekete's lemma. The majority of sequences considered in discrete mathematics and information theory do, in fact, satisfy non-negativity. For didactic reasons, we first establish the equivalence for computable sequences of rational numbers and present the equivalence for computable sequences of computable numbers subsequently.  

    \begin{Lemma}\label{T1.2}
        Let \((\rr_n)_{n\in\NN}\) be a superadditive, computable sequence of rational numbers that satisfies \(0 \leq \rr_n\) for all \(n\in\NN\). Furthermore, let \(x_*\) be a real number such that \(\lim_{n\to\infty} \rr_n\cdot n^{-1} = x_*\) holds true. The sequence \((\rr_n\cdot n^{-1})_{n\in\NN}\) converges \emph{effectively} towards \(x_*\) if and only if \(x_*\) is a computable number.
    \end{Lemma}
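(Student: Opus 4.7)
The forward implication is essentially a definitional unwrapping: if $(\rr_n\cdot n^{-1})_{n\in\NN}$ converges effectively to $x_*$, then $(\rr_n / n)_{n\in\NN}$ is itself a computable sequence of rational numbers (a recursive encoding is obtained from the one for $\rr_n$ by multiplying the denominator by $n$) equipped with a recursive modulus of convergence, so Definition \ref{compreal} immediately yields $x_* \in \RR_c$.

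For the converse, my plan is to effectivize the classical proof of Fekete's lemma. Its quantitative core is the following estimate: for any $m\in\NN$ and any $n \geq m$, writing $n = km + r$ with $0 \leq r < m$, superadditivity together with $\rr_i \geq 0$ gives $\rr_n \geq k\,\rr_m$, and, combined with the upper bound $\rr_n/n \leq x_*$ supplied by Lemma \ref{lem:fekete}, yields
\begin{equation}
    0 \;\leq\; x_* - \frac{\rr_n}{n} \;\leq\; \Bigl(x_* - \frac{\rr_m}{m}\Bigr) + \frac{\rr_m}{n}.
\end{equation}
The strategy, given $M\in\NN$, is to first pick $m$ so that the first summand on the right-hand side is less than $1/2^{M+1}$, and then take $n$ large enough that the second summand is also less than $1/2^{M+1}$.

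To effectivize the choice of $m$, I would use the standard description of $x_*$ to compute the rational lower bound $L := q_{M+2} - 1/2^{M+2}$, which satisfies both $L < x_*$ and $x_* - L < 1/2^{M+1}$, and then unboundedly search for the smallest $m\in\NN$ with $\rr_m/m > L$. The search terminates because $\sup_{k\in\NN}\rr_k/k = x_* > L$ strictly by Lemma \ref{lem:fekete}, and because each comparison is a decidable test between two rational numbers. Setting $\kappa(M) := \max\bigl(m,\, \lfloor 2^{M+1}\rr_m \rfloor + 1\bigr)$ then guarantees $n \geq m$ and $\rr_m/n < 1/2^{M+1}$ for all $n \geq \kappa(M)$, whence the displayed estimate delivers $|x_* - \rr_n/n| < 1/2^M$. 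The whole construction is manifestly recursive in $M$, so $\kappa$ is the required modulus of convergence. The only genuinely subtle point is justifying termination of the unbounded search, which requires explicitly combining Fekete's lemma (to identify $\sup_k \rr_k/k$ with $x_*$) with the strict gap $x_* - L > 0$ obtained from the standard description of $x_*$; the rest is routine rearrangement together with the non-negativity hypothesis on $(\rr_n)_{n\in\NN}$.
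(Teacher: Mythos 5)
Your proof is correct: the forward direction matches the paper's (it is definitional), and your converse is a valid effectivization of Fekete's argument, with the termination of the unbounded search correctly justified by $\sup_k \rr_k/k = x_* > L$. However, your route differs from the paper's in the certificates it uses. The paper does not extract a two-sided rational approximation of $x_*$; instead it takes a monotonically non-increasing computable sequence $(\lr_n)_{n\in\NN}$ of rationals converging to $x_*$ from above (which exists because $x_*$ is computable), searches for the smallest $n_0$ satisfying the two decidable conditions $\lr_{n_0}/n_0 < 2^{-(M+1)}$ and $\lr_{n_0} - \rr_{n_0}/n_0 < 2^{-(M+1)}$, and then sets $\kappa(M) := n_0^2$; the quadratic blow-up makes the remainder term $(1-qn_0/n)\lr_{n_0} \leq \lr_{n_0}/n_0$ small, whereas you control your remainder $\rr_m/n$ by the explicit threshold $n > 2^{M+1}\rr_m$, giving $\kappa(M) = \max\bigl(m, \lfloor 2^{M+1}\rr_m\rfloor + 1\bigr)$. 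Both are effectivizations of the same classical estimate, but the inputs differ: your construction consumes a full standard description of $x_*$ (and, as a small notational point, writing $L = q_{M+2} - 2^{-(M+2)}$ presupposes an identity modulus; in general you should use $q_{\mu(M+2)}$ for the given modulus $\mu$), while the paper's consumes only an upper monotone rational approximation. That choice is deliberate: as Remark \ref{rem:ProofsPartwiseConstructive} and Theorem \ref{thm:analogonFekete} show, the paper reuses this proof verbatim to obtain a Turing machine that computes the modulus uniformly from a P/B-additive representation, where a non-increasing upper sequence is directly available from the subadditive part; your variant would first have to assemble a standard description of $x_*$ from the two representations before it could be applied uniformly. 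For the lemma as stated, with $x_*$ merely assumed computable, both arguments are equally valid.
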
\begin{proof}
        If the sequence \((\rr_n\cdot n^{-1})_{n\in\NN}\) converges effectively towards \(x_*\), then \(x_*\) is computable, as is clear from the definition of computable numbers. 
    
        For the converse, consider a monotonically non-increasing sequence \((\lr_n)_{n\in\NN}\) of rational numbers that satisfies \(\lim_{n\to\infty} \lr_n = x_*\). The existence of such a sequence is ensured by \(x_*\) being a computable number. Furthermore, by \emph{Lemma \ref{lem:fekete}}, we have \( \rr_n\cdot n^{-1} \leq x_*\) for all \(n\in\NN\). For \(M\in\NN\), we define
        \begin{equation}\label{eq:nO}
            n_0 := \min \bigg\{n \in \NN ~:~    \frac {\lr_n}{n} < \frac{1}{2^{M+1}} ~\wedge\lr_n-\frac{1}{2^{M+1}} < \frac{\rr_n}{n}\bigg\}.
        \end{equation}
        Since both \((\rr_n \cdot n^{-1})_{n\in\NN}\) and \((\lr_n)_{n\in\NN}\) converge to the same number, we know that \(n_0\) must exist. Furthermore, both conditions in \eqref{eq:nO} can be decided algorithmically, since they exclusively involve comparisions on rational numbers. Thus, given the sequences \((\rr_n)_{n\in\NN}\) and  \((\lr_n)_{n\in\NN}\), the mapping \(M \mapsto n_0\) is recursive. For all \(n\in\NN\) that satisfy \(n\geq n_0\), there exist two unique numbers \(q,s\in \NN\) with \(0\leq s\leq n_0-1\), such that \(n=qn_0+s\) holds true. We have
        \begin{align}
            \rr_{qn_0+s}    &\geq \rr_{qn_0}+\rr_{s} \nonumber\\
                            &\geq \rr_{n_0} + \rr_{(q-1)n_0}+\rr_{s} \nonumber\\
                            &\geq \rr_{n_0} + \rr_{n_0} + \rr_{(q-2)n_0}+\rr_{s} \nonumber\\
                            &\geq \quad \vdots \nonumber\\
                            &= q\rr_{n_0}+\rr_{s}.
        \end{align}
        Thus, \(q\rr_{n_0}+\rr_{s} \leq \rr_{qn_0+s} = \rr_n\) is satisfied, and consequently,
        \begin{align}
            \frac {\rr_n}{n}    &\geq \frac {q\rr_{n_0}}{n}+\frac {\rr_s}{n}\nonumber\\
                                &= \frac {qn_0}{n} \frac {\rr_{n_0}}{n_0} +\frac {\rr_s}{n}\nonumber\\
                                &> \frac {qn_0}{n} \lr_{n_0} - \frac {qn_0}{n}\frac {1}{2^{M+1}} +\frac {\rr_s}{n}\nonumber\\
                                &> \frac {qn_0}{n} \lr_{n_0} - \frac {1}{2^{M+1}}
        \end{align}
        holds true. Now consider \(n\in\NN\) satisfying \(n \geq n_0^2\). Since any \(n\in\NN\) that satisfies \(n \geq n_0^2\) certainly satisfies \(n \geq n_0\) as well, there again exist unique numbers \(q,s \in \NN\) with \(0\leq s\leq n_0-1\), such that \(n=qn_0+s\) holds true. Thus, we have
        \begin{align}
            0       &\leq \lr_n-\frac {\rr_n}{n}\nonumber\\
                    &\leq \lr_{n_0}-\frac {\rr_n}{n}\nonumber\\
                    &\leq \lr_{n_0}- \frac {qn_0}{n}\lr_{n_0}+\frac {1}{2^{M+1}}\nonumber\\
                    &\leq \left( 1-\frac{qn_0}{n}\right)\lr_{n_0}+\frac 1{2^{M+1}}\nonumber\\
                    &\leq \frac{\lr_{n_0}}{n_0} +\frac 1{2^{M+1}}\label{eq:FooEQ}\\
                    &< \frac 1{2^{M+1}}+\frac 1{2^{M+1}}\nonumber\\ 
                    &= \frac 1{2^{M}},
        \end{align}
        where \eqref{eq:FooEQ} holds true because
        \begin{align}
            1-\frac{qn_0}{n}=\frac {n-qn_0}{n}\leq  \frac {n_0}{n}\leq \frac {n_0}{n_0^2} = \frac{1}{n_0}
        \end{align} 
        is satisfied. Define the mapping \(\kappa:\NN \rightarrow \NN, M \mapsto n_0^2\), which, given the sequences \((\rr_n)_{n\in\NN}\) and \((\lr_n)_{n\in\NN}\), is a recursive function. Then, we have 
        \begin{align}
            \left|x_*-\frac {\rr_n}{n}\right|=x_*-\frac{\rr_n}{n}\leq \lr_n-\frac {\rr_n}n\leq \frac 1{2^M}
        \end{align}
        for all \(n,M\in\NN\) that satisfy \(n \geq \kappa(M)\). Thus, \(\kappa\) yields the required modulus of convergence for the sequence \((\rr_n\cdot n^{-1})_{n\in\NN}\).
    \qed\end{proof}

    \begin{Theorem}\label{T1.3}
        Let \((\ra_n)_{n\in\NN}\) be a superadditive representation of the number \(x_* \in \RR\) that satisfies \( 0 \leq \ra_n \) for all \(n\in\NN\). The sequence \((\ra_n\cdot n^{-1})_{n\in\NN}\) converges \emph{effectively} towards \(x_*\) if and only if \(x_*\) is a computable number.
    \end{Theorem}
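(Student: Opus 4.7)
The ``only if'' direction is immediate: each $\ra_n$ is a computable number, so each $\ra_n/n$ is computable, and combining the standard description of $(\ra_n)_{n\in\NN}$ with any recursive modulus of convergence for $(\ra_n/n)_{n\in\NN}$ yields a recursive procedure for producing rational approximations of $x_*$ to arbitrary precision, whence $x_* \in \RR_c$.

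For the converse, my plan is to replicate the proof of \emph{Lemma \ref{T1.2}} almost verbatim, the only nontrivial modification being the algorithmic handling of the comparison $\lr_{n_0} - 1/2^{M+1} < \ra_{n_0}/n_0$, whose right-hand side is now a computable number rather than a rational. Since $x_* \in \RR_c$, I would first fix a monotonically non-increasing computable sequence $(\lr_n)_{n\in\NN}$ of rationals with $\lim_{n\to\infty} \lr_n = x_*$, noting that $\lr_n \geq x_* \geq 0$. Given $M \in \NN$, the aim is then to recursively construct an index $n_0 = n_0(M)$ satisfying both
\begin{align*}
\frac{\lr_{n_0}}{n_0} < \frac{1}{2^{M+1}} \qquad\text{and}\qquad \lr_{n_0} - \frac{1}{2^{M+1}} < \frac{\ra_{n_0}}{n_0},
\end{align*}
in analogy with \eqref{eq:nO}. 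The first condition is a purely rational comparison and is decidable. For the second condition, which constitutes the main obstacle, I would compute from the standard description of $(\ra_n)_{n\in\NN}$ a rational $\tilde{\rho}$ with $|\tilde{\rho} - \ra_{n_0}/n_0| < 1/2^{M+3}$, and replace the second test with the strictly stronger rational inequality $\lr_{n_0} - 1/2^{M+2} < \tilde{\rho} - 1/2^{M+3}$, which implies the desired inequality and is decidable. Since both $(\lr_n)_{n\in\NN}$ and $(\ra_n/n)_{n\in\NN}$ converge to $x_*$, the strengthened test is satisfied for all sufficiently large $n_0$, so a dovetailed search over $n_0$ terminates and the assignment $M \mapsto n_0(M)$ is recursive.

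Setting $\kappa(M) := n_0(M)^2$ then supplies the candidate modulus, and the verification proceeds exactly as in \emph{Lemma \ref{T1.2}}: for $n \geq n_0^2$, writing $n = q n_0 + s$ with $0 \leq s \leq n_0 - 1$, superadditivity together with the non-negativity hypothesis on $(\ra_n)_{n\in\NN}$ gives $\ra_n \geq q \ra_{n_0}$, so that
\begin{align*}
\frac{\ra_n}{n} \geq \frac{q n_0}{n}\cdot\frac{\ra_{n_0}}{n_0} > \frac{q n_0}{n}\,\lr_{n_0} - \frac{1}{2^{M+1}}.
\end{align*}
Combining this with $\ra_n/n \leq x_* \leq \lr_n \leq \lr_{n_0}$ and the elementary estimate $(1 - qn_0/n)\lr_{n_0} \leq (n_0/n)\lr_{n_0} \leq \lr_{n_0}/n_0 < 1/2^{M+1}$, valid because $\lr_{n_0} \geq 0$ and $n \geq n_0^2$, one obtains $|x_* - \ra_n/n| < 1/2^M$, as required. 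The principal difficulty, as already indicated, is purely the algorithmic treatment of the rational-versus-computable comparison; the eventual strict gap between $\lr_n$ and $\ra_n/n$ ensures that replacing the second test by a sufficiently tight rational surrogate does not prevent termination, and the remainder of the Fekete-style argument then carries over without change.
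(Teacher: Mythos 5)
Your overall strategy is sound but genuinely different from the paper's. The paper does not re-run the argument of \emph{Lemma \ref{T1.2}} directly on the computable-number sequence; instead it extracts from the standard description a rational minorant $\rrp_n := \max\{0,\ \rr_{n,\rkap(n,n)} - 2^{-n}\} \leq \ra_n$, makes it superadditive by the inductive definition $\rrppx{n} := \max\bigl\{\rrp_n,\ \max\{\rrppx{l} + \rrppx{k} : l+k=n\}\bigr\}$, verifies $\rrp_n \leq \rrppx{n} \leq \ra_n$ and $\lim_{n\to\infty}\rrppx{n}\cdot n^{-1} = x_*$, and then applies \emph{Lemma \ref{T1.2}} verbatim to this rational sequence; since $\rrppx{n}\cdot n^{-1} \leq \ra_n\cdot n^{-1} \leq x_*$, any modulus of convergence for the minorant works for $(\ra_n\cdot n^{-1})_{n\in\NN}$ as well. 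Your plan---keep $(\ra_n)_{n\in\NN}$ itself and replace the single rational-versus-computable comparison in \eqref{eq:nO} by a decidable rational surrogate---is a legitimate alternative with the same non-constructive ingredient (the choice of $(\lr_n)_{n\in\NN}$), and your verification for $n \geq n_0^2$ is the same Fekete-style computation as in the paper.

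There is, however, a quantitative gap at exactly the step you identify as the crux. Your surrogate test reads $\lr_{n_0} - 2^{-(M+2)} < \tilde{\rho} - 2^{-(M+3)}$, i.e. $\lr_{n_0} - \tilde{\rho} < 2^{-(M+3)}$, while the only guarantee you invoke on $\tilde{\rho}$ is $|\tilde{\rho} - \ra_{n_0}\cdot n_0^{-1}| < 2^{-(M+3)}$. Soundness is fine, but the termination claim (``the strengthened test is satisfied for all sufficiently large $n_0$'') does not follow: writing $\lr_{n_0} - \tilde{\rho} = (\lr_{n_0} - \ra_{n_0}\cdot n_0^{-1}) + (\ra_{n_0}\cdot n_0^{-1} - \tilde{\rho})$, the first summand is always nonnegative and the second, whose sign the standard description does not control, may stay arbitrarily close to $2^{-(M+3)}$ for every $n_0$; hence the test can fail for all $n_0$ and the dovetailed search need not halt. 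With the tolerance equal to the surrogate margin, the test is only guaranteed to fire when $\lr_{n_0} - \ra_{n_0}\cdot n_0^{-1} < 0$, which never happens. The repair is a one-bit adjustment: demand $|\tilde{\rho} - \ra_{n_0}\cdot n_0^{-1}| < 2^{-(M+4)}$ (or note that approximating $\ra_{n_0}$ itself to within $2^{-(M+3)}$ and dividing by $n_0$ yields error at most $2^{-(M+3)}\cdot n_0^{-1}$), so that the approximation accuracy is strictly finer than the surrogate margin; then the test fires as soon as $\lr_{n_0} - \ra_{n_0}\cdot n_0^{-1}$ drops below the resulting positive threshold, which eventually occurs, soundness still gives $\lr_{n_0} - 2^{-(M+1)} < \ra_{n_0}\cdot n_0^{-1}$, and the remainder of your argument goes through as written.
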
\begin{proof}
        In the following, let \(\big((\rr_{n,m})_{n,m\in\NN},\rkap\big)\) be a standard description of the sequence \((\ra_n)_{n\in\NN}\).
        
        If \((\ra_n)_{n\in\NN}\) converges effectively towards \(x_*\) with recursive modulus of convergence \(\kappa:\NN \rightarrow \NN\), it is straightforward to construct a computable sequence of rational numbers from \(\big((\rr_{n,m})_{n,m\in\NN},\rkap\big)\) and \(\kappa\) that converges effectively towards \(x_*\). Then, \(x_*\) is a computable number.
        
        For the converse, we will show the existence of a non-negative, superadditive, computable sequence \((\rrppx{n})_{n\in\NN}\) of rational numbers that satisfies \(\rrppx{n} \leq \ra_n\) for all \(n\in\NN\) as well as \(\lim_{n\to\infty} \rrppx{n}\cdot n^{-1} = x_*\). The theorem then follows by application of \emph{Lemma \ref{T1.2}}. Define the sequence \((\rrp_{n})_{n\in\NN}\) via
        \begin{align}
            \rrp_{n} := \max\left\{0,~ \rr_{n,\rkap(n,n)} - \frac{1}{2^n}\right\}
        \end{align}
        for all \(n\in\NN\). Then, \((\rrp_{n})_{n\in\NN}\) is a computable sequence of rational numbers that satisfies \(\rrp_{n} \leq \ra_n\) for all \(n\in\NN\) as well as \(\lim_{n \to\infty} \rrp_{n} \cdot n^{-1} = x_*\). We now define the sequence \((\rrppx{n}\hspace{1pt})_{n\in\NN}\) in an inductive manner.
        Set \(\rrppx{1} := \rrp_1\), and for all \(n\in\NN\) that satisfy \(n \geq 2\),
        \begin{align}
            s_n         :&= \max\big\{\rrppx{l}\hspace{2pt} + \rrppx{k} : l,k\in \NN, l + k = n\big\}, \\
            \rrppx{n}   :&= \max\big\{\rrp_{n},~ s_n\big\}. 
        \end{align}
        Then, \((\rrppx{n}\hspace{1pt})_{n\in\NN}\) is a non-negative, superadditive, computable sequence of rational numbers that satisfies \(\rrp_{n} \leq \rrppx{n}\) for all \(n\in \NN\). Assume \(\rrppx{n} = \rrp_{n}\) for some \(n\in\NN\). Then, \(\rrppx{n} \leq \ra_n\) holds true.
        Assume now \(\rrppx{n} = \rrppx{l} + \rrppx{k}\) for some \(l,k,n\in\NN\) that satisfy \(n = l + k\). Then, by induction,
        \begin{align}
            \rrppx{n} = \rrppx{l} + \rrppx{k} \leq \ra_{l} + \ra_{k} \leq \ra_{l + k} = \ra_n  
        \end{align}
        holds true. Thus, the sequence \((\rrppx{n})_{n\in\NN}\) satisfies \(\rrp_{n} \leq \rrppx{n} \leq \ra_n\) for all \(n\in\NN\). Therefore, we have \(\rrp_{n}\cdot n^{-1} \leq \rrppx{n} \cdot n^{-1} \leq \ra_n\cdot n^{-1} \leq x_*\) for all \(n\in\NN\), as well as \(\lim_{n\to\infty} \rrppx{n}\cdot n^{-1} = x_*\). Any modulus of convergence for the sequence \((\rrppx{n}\cdot n^{-1})_{n\in\NN}\) is thus a modulus of convergence for the sequence \((\ra_n\cdot n^{-1})_{n\in\NN}\) as well. The theorem then follows from \emph{Lemma \ref{T1.2}}.
    \qed\end{proof}

    \begin{Remark}\label{rem:ProofsPartwiseConstructive}
        Observe that apart from the necessity of finding a suitable sequence \((\lr_n)_{n\in\NN}\), both the proof of \emph{Lemma \ref{T1.2}} and the proof of \emph{Theorem \ref{T1.3}} are constructive. Thus, given a suitable pair \(\big((\lr_n)_{n\in\NN},(\ra_n)_{n\in\NN}\big)\), we can find a modulus of convergence \(\kappa\) for the sequence \((\ra_n\cdot n^{-1})_{n\in\NN}\) by means of a Turing machine.   
    \end{Remark}

    \begin{Remark}
        In the above sense, we have a positive answer to Ahlswede's question (cf. \emph{Remark} \ref{non}) for computable capacities that exhibit a non-negative, superadditive representation.
    \end{Remark}

    We have discussed the necessity of algorithmic constructions for moduli of convergence to a great extent in the previous part of our work. While \emph{Lemma \ref{T1.2}} and \emph{Theorem \ref{T1.3}} prove the existence of a recursive modulus of convergence in the context of the effective version of Fekete's lemma, they do so in a non-constructive way. In the following, we will apply \emph{Theorem \ref{T1.5}} to show that even the effective version of Fekete's lemma is not strong enough to allow for a recursive construction of the associated modulus of convergence.

    \begin{Theorem}\label{T1.7}
        There exists a family \((\rrmn)_{n,m\in\NN}\) of computable sequences of rational numbers that simultaneously satisfies the following:
        \begin{itemize}
            \item 
                The family \((\rrmn)_{n,m\in\NN}\) is recursive, i.e, there exists a computable double sequence \((\rrp_{m,n})_{n,m\in\NN}\) of rational numbers that satisfies \(\rrp_{m,n} = \rrmn\) for all \(m,n\in\NN\).
            \item 
                For all \(m\in\NN\), the sequence \((\rrmn)_{n\in\NN}\) satisfies non-negativity  as well as superadditivity in \(n\), and there exists \(x_*^{m} \in \RR_c\) such that \(\lim_{n\to\infty} \rrmn \cdot n^{-1} = x_*^{m}\) holds true.
            \item 
                There does \emph{not} exist a Turing machine \(TM : \NN \times \NN \rightarrow \NN\) such that \(|x_*^{m} - \rrmn\cdot n^{-1}| < \sfrac{1}{2^M}\) holds true for all \(n,m,M\in\NN\) that satisfy \(n \geq TM(m, M)\).
        \end{itemize}
    \end{Theorem}\begin{proof} 
        We prove the statement by contradiction. Let \((\rrppx{n,m})_{n,m\in\NN}\) be computable double sequences of rational numbers that satisfy the requirements of \emph{Theorem \ref{T1.5}}, and suppose \(TM\) does exist. For all \(n,m\in\NN\), define
        \begin{align}
            \rrmn := n\cdot \rrppx{n,m}.
        \end{align}
        Then, \((\rrmn)_{n,m\in\NN}\) is a recursive family of computable sequences of rational numbers that satisfies non-negativity as well as superadditivity in \(n\) for all \(m\in\NN\). Furthermore, for all \(m\in\NN\), there exists \(x_*^{m}\in\RR_c\) such that
        \begin{align}
            \lim_{n\to\infty} \rrmn\cdot n^{-1} = \lim_{n\to\infty} \rrppx{n,m} = x_*^{m}
        \end{align}
        holds true. By assumption, the Turing machine \(TM\) satisfies
        \begin{align}
            \big|x_*^{m} - \rrmn\cdot n^{-1} \big| = 
            \big|x_*^{m} - \rrppx{n,m}       \big| < 2^{-M}
        \end{align}
        for all \(n,m,M\in\NN\) that satisfy \(n \geq TM(m, M)\), thereby contradicting \emph{Theorem \ref{T1.5}}.
    \qed\end{proof}

    In \emph{Section \ref{converge}}, we have introduced the notion of monotonic representations for computable numbers. By \emph{Theorem \ref{thm:CompleteMonRepr}}, we know that given a monotonic representation \(\big((\rw_n)_{n\in\NN},(\lw_n)_{n\in\NN}\big)\) for some number \(x_*\in\RR\), we can algorithmically deduce moduli of convergence for the sequences \((\rw_n)_{n\in\NN}\) and \((\lw_n)_{n\in\NN}\). On the other hand, we know by \emph{Theorem \ref{thm:LsubLsupRc}} that a number \(x_*\in\RR\) is computable if and only if it exibits both a super- and a subadditive representation. In view of this result, we conclude the section by proving a claim similar to \emph{Theorem \ref{thm:CompleteMonRepr}} for non-negative super- and subadditive representations. 

    In the following, we refer to a pair \(\big((\ra_n)_{n\in\NN},(\la_n)_{n\in\NN}\big)\) as a \emph{P/B-additive representation} of \(x_*\in\RR\) if \((\ra_n)_{n\in\NN}\) is a superadditive representation of \(x_*\) and \((\la_n)_{n\in\NN}\) is a subadditive representation of \(x_*\). Furthermore, we denote by \(\A\) the set of standard descriptions
    \begin{align}
        \da = \big((\rr_{n,m})_{n,m\in\NN}, \rkap, (\lr_{n,m})_{n,m\in\NN}, \lkap\big)
    \end{align}
    of P/B-additive representations of real numbers.

    \begin{Theorem}\label{thm:analogonFekete}
        There exists a Turing machine \(TM : A\times \NN \rightarrow \NN\) that satisfies the following:
        \begin{itemize}
            \item 
                If \(\big((\ra_n)_{n\in\NN},(\la_n)_{n\in\NN}\big)\) is a P/B-additive representation of \(x_* \in \RR\) that satisfies \(0 \leq \ra_n\) for all \(n\in\NN\) and \(\da\) is a standard description of \(\big((\ra_n)_{n\in\NN},(\la_n)_{n\in\NN}\big)\), then 
                \begin{align}
                    \big|x_*-\ra_n \cdot n^{-1}\big|<\frac 1{2^M}
                \end{align}
                holds true for all \(n\in\NN\) that satisfy \(n\geq TM(\da,M)\). 
            \end{itemize}
    \end{Theorem}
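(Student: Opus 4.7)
The plan is to reduce the statement to \emph{Lemma \ref{T1.2}} by extracting, in a way that is uniformly recursive in $\da$, a pair consisting of a non-negative, superadditive, computable sequence of rational numbers that bounds $(\ra_n)_{n\in\NN}$ from below and a monotonically non-increasing, computable sequence of rational numbers that bounds $x_*$ from above. Since by \emph{Theorem \ref{thm:LsubLsupRc}} the existence of both a super- and a subadditive representation of $x_*$ implies $x_* \in \RR_c$, the hypotheses of \emph{Lemma \ref{T1.2}} will be satisfied, and the constructive character of its proof will yield the required Turing machine.

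First, I would process the superadditive part of $\da$ exactly as in the proof of \emph{Theorem \ref{T1.3}}. From $\big((\rr_{n,m})_{n,m\in\NN},\rkap\big)$ I define
\[
    \rrp_n := \max\bigl\{0,~\rr_{n,\rkap(n,n)} - 2^{-n}\bigr\}
\]
and then, inductively, $\rrppx{1} := \rrp_1$ and for $n \geq 2$,
\[
    \rrppx{n} := \max\Bigl\{\rrp_n,~ \max\bigl\{\rrppx{l} + \rrppx{k} : l,k \in \NN,\ l+k = n\bigr\}\Bigr\}.
\]
By the argument in the proof of \emph{Theorem \ref{T1.3}}, $(\rrppx{n})_{n\in\NN}$ is a non-negative, superadditive, computable sequence of rational numbers satisfying $\rrppx{n} \leq \ra_n$ for all $n\in\NN$ and $\lim_{n\to\infty} \rrppx{n}\cdot n^{-1} = x_*$. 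The entire construction is uniformly recursive in $\da$.

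Second, I would turn the subadditive part of $\da$ into a monotonically non-increasing rational upper approximation to $x_*$. By \emph{Lemma \ref{lem:fekete}}, $\la_m \cdot m^{-1} \geq x_*$ for every $m$, and the rational number $\lr_{m,\sublkap(m,m)} + 2^{-m}$ is an upper bound for $\la_m$. Setting
\[
    \lrp_n := \min_{1 \leq m \leq n} \frac{\lr_{m,\sublkap(m,m)} + 2^{-m}}{m}
\]
yields a monotonically non-increasing, computable sequence of rational numbers with $\lrp_n \geq x_*$ for all $n$ and, because the argument of the minimum tends to $x_*$, $\lim_{n\to\infty} \lrp_n = x_*$. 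Again, the mapping $\da \mapsto (\lrp_n)_{n\in\NN}$ is recursive.

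Third, I would feed the pair $\bigl((\rrppx{n})_{n\in\NN}, (\lrp_n)_{n\in\NN}\bigr)$ into the constructive argument from the proof of \emph{Lemma \ref{T1.2}}: that argument computes, for each $M$, the number $n_0$ determined by \eqref{eq:nO} and returns $\kappa(M) = n_0^2$ as a modulus of convergence for $(\rrppx{n}\cdot n^{-1})_{n\in\NN}$. Composing the three steps gives a Turing machine $TM : \A \times \NN \rightarrow \NN$. Finally, since $0 \leq \rrppx{n} \leq \ra_n$ implies $\rrppx{n}\cdot n^{-1} \leq \ra_n\cdot n^{-1} \leq x_*$ (the right inequality by \emph{Lemma \ref{lem:fekete}}), any modulus of convergence for $(\rrppx{n}\cdot n^{-1})_{n\in\NN}$ is also a modulus of convergence for $(\ra_n\cdot n^{-1})_{n\in\NN}$, so $TM(\da,M)$ meets the requirement of the theorem.

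The only genuinely new ingredient beyond \emph{Lemma \ref{T1.2}} and \emph{Theorem \ref{T1.3}} is the effective passage from the subadditive standard description to a rational, monotonically non-increasing sequence converging to $x_*$ from above; this is the step where care is needed because the $\la_m$ are merely computable, not rational. Beyond that, the main obstacle is purely bookkeeping: verifying that every step — the extraction of $(\rrppx{n})_{n\in\NN}$, the construction of $(\lrp_n)_{n\in\NN}$, and the search for $n_0$ in \eqref{eq:nO} — can be uniformly carried out by a single Turing machine whose sole input is $\da$ and $M$.
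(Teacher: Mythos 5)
Your proposal is correct and follows essentially the same route as the paper's own proof: it extracts a non-negative, superadditive rational minorant of \((\ra_n)_{n\in\NN}\) exactly as in \emph{Theorem \ref{T1.3}}, turns the subadditive part of \(\da\) into a monotonically non-increasing rational sequence converging to \(x_*\) from above, and then runs the constructive modulus computation (the search for \(n_0\) and the output \(n_0^2\)) from \emph{Lemma \ref{T1.2}}, all uniformly in \(\da\) and \(M\). The only point of divergence is your normalization of the upper sequence, \(\lrp_n := \min_{1\leq m\leq n}\bigl(\lr_{m,\sublkap(m,m)}+2^{-m}\bigr)/m\), which is indeed what is required for \(\lrp_n \to x_*\); the paper's displayed definition of \(\lrp_n\) omits the division by \(m\), so your version is the correct reading of that step.
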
\begin{proof}
        As \emph{Remark \ref{rem:ProofsPartwiseConstructive}} points out, both the proof of \emph{Lemma \ref{T1.2}} and the proof of \emph{Theorem \ref{T1.3}} are constructive, apart from the necessity of finding a monotonically non-increasing, computable sequence that converges towards \(x_*\) from above. We will thus restrict ourselves to a brief summary of the necessary steps. 
        
        Consider the standard description \(\da = \big((\rr_{n,m})_{n,m\in\NN}, \rkap, (\lr_{n,m})_{n,m\in\NN}, \lkap\big)\) of \(\big((\ra_n)_{n\in\NN},(\la_n)_{n\in\NN}\big)\) and define
        \begin{align}
            \lrp_n := \min\left\{ \lr_{m,\sublkap(m,m)} + \frac{1}{2^m} : 1 \leq m \leq n \right\}
        \end{align}
        for all \(n\in\NN\). The sequence \((\lrp_n)_{n\in\NN}\) is a monotonically non-increasing, computable sequence of rational numbers that satisfies \(\lim_{n\to\infty} \lrp_n = x_*\) and the mapping \(\big((\lr_{n,m})_{n,m\in\NN},\lkap\big) \mapsto (\lrp_n)_{n\in\NN}\) is recursive. Furthermore, define 
        \begin{align}
            \rrppx{n} : &=  \max\left\{0,~ \rr_{n,\rkap(n,n)} - \frac{1}{2^n}\right\}, \\
            \rrp_n :    &=  \begin{cases}
                                \rrppx{1}     &\quad \text{if}~ n = 1, \\
                                \max\big(\S_n\vee\big\{\rrppx{n}\big\}\big) &\quad \text{otherwise}, 
                            \end{cases} \\
            \S_n :      &=  \big\{\rrp_l + \rrp_k : l + k = n\big\}, \\ 
            \kappa(M) : &=  \bigg(\min \bigg\{n ~:~ \frac {\lrp_n}{n} < \frac{1}{2^{M+1}} ~\wedge\nonumber\\
                        &\hspace{2.75cm} \lrp_n-\frac{1}{2^{M+1}} < \frac{\rrp_n}{n}\bigg\}\bigg)^2,
        \end{align}
        for all \(n,M\in\NN\). Setting \(TM(\da,M) := \kappa(M)\) yields the required Turing machine. 
    \qed\end{proof}

\section{Computable Sequences of Rational Numbers versus Computable Sequences of Computable Numbers}        %
\label{comments}                                                                                            %
% --------------------------------------------------------------------------------------------------------- %

    In this section, we want to prove an additional result concerning the representation of number sequences. In particular, we will consider computable sequences \((x_n)_{n\in\NN}\) of computable numbers that additionally satisfy \(x_n\in\QQ\) for all \(n\in\NN\). At first sight, the difference to a representation of \((x_n)_{n\in\NN}\) in terms of \emph{Definition \ref{ber}} may not be apparent. However \((x_n)_{n\in\NN}\) being a computable sequence of rational numbers is a strictly stronger assertion that \((x_n)_{n\in\NN}\) being a computable sequence of computable numbers that additionally satisfies \(x_n\in\QQ\) for all \(n\in\NN\), as we will see in the following.

    \begin{Theorem}\label{thm:foo}
    There exist non-negative and computable sequences \((\rw_n)_{n\in\NN}\) and \((\ra_n)_{n\in\NN}\) of computable numbers that satisfy the following:
        \begin{enumerate}
            \item 
                The sequence \((\rw_n)_{n\in\NN}\) is a lower monotonic representation of a rational number and satisfies  \(\rw_n\in\QQ\) for all \(n\in\NN\), but is not a computable sequence of rational numbers.
            \item 
                The sequence \((\ra_n)_{n\in\NN}\) is a superadditive representation of a rational number and satisfies \(\ra_n\in\QQ\) for all \(n\in\NN\), but is not a computable sequence of rational numbers.
        \end{enumerate}
    \end{Theorem}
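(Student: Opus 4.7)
The approach will exploit the subtle gap between the two notions: a computable sequence of computable numbers is specified only by a standard description with a recursive modulus giving rational approximations to arbitrary precision (\emph{Definition \ref{def:CSCN}}), whereas a computable sequence of rational numbers demands that the exact sign/numerator/denominator triple be uniformly recursive in the index (\emph{Definition \ref{ber}}). The plan is to construct sequences whose terms all lie in $\QQ$ and are effectively approximable, but whose exact rational representations would decide a non-recursive recursively enumerable set.

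For the construction, let $H := \{e\in\NN : \phi_e(e)\text{ halts}\}$ denote the halting set, which is recursively enumerable but not recursive, and set
\begin{align}
    \epsilon_n := \begin{cases} 2^{-3n-m} & \text{if } \phi_n(n) \text{ halts in exactly } m \text{ steps,}\\ 0 & \text{otherwise.} \end{cases}
\end{align}
The first verification is that $(\epsilon_n)_{n\in\NN}$ is a computable sequence of computable numbers with $\epsilon_n\in\QQ$ and $\epsilon_n \geq 0$ for all $n$: on input $(n,M)$, simulate $\phi_n(n)$ for $M$ steps; if it halts within $m\leq M$ steps, output $2^{-3n-m}$ (the exact value of $\epsilon_n$), otherwise output $0$, which approximates $\epsilon_n$ within $2^{-3n-M}\leq 2^{-M}$. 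Crucially, the exact rational value of $\epsilon_n$ would reveal whether $\epsilon_n = 0$, which is equivalent to $n \notin H$. The candidate sequences are then
\begin{align}
    \rw_n := 1 - 2^{-n} + \epsilon_n, \qquad \ra_n := n\cdot \rw_n,
\end{align}
both of which are non-negative, termwise rational, and computable sequences of computable numbers via the effective field operations on $\RR_c$ (cf.\ \emph{Remark \ref{rem:RcEffectiveField}}).

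To finish, monotonic non-decrease of $(\rw_n)_{n\in\NN}$ follows from $\rw_{n+1}-\rw_n \geq 2^{-n-1} - 2^{-3n} > 0$ for $n \geq 1$ (with a trivial adjustment at the base index if required), and superadditivity of $(\ra_n)_{n\in\NN}$ drops out of the non-decrease of $(\rw_n)$ via $(n+m)\rw_{n+m} = n\rw_{n+m} + m\rw_{n+m} \geq n\rw_n + m\rw_m$. Since $\epsilon_n \to 0$, both $\rw_n$ and $\ra_n\cdot n^{-1}$ converge to $1\in\QQ$, so $(\rw_n)$ is a lower monotonic representation of the rational $1$ and $(\ra_n)$ is a superadditive representation of $1$. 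The final step is a reductio: if either sequence were a computable sequence of rational numbers, then the exact rational value of $\epsilon_n = \rw_n - (1-2^{-n})$, respectively $\epsilon_n = \ra_n/n - (1-2^{-n})$, would be uniformly recoverable in $n$, rendering the predicate ``$\epsilon_n = 0$'' recursive and thereby deciding $H$, contradiction. The main obstacle is to simultaneously honour monotonicity, superadditivity, termwise rationality, a rational limit, and effective approximability while still encoding undecidable information in the exact rational representation; the scale $\epsilon_n = O(2^{-3n})$ achieves this balance because it is dwarfed by the $2^{-n-1}$ increments of the skeleton $1 - 2^{-n}$ (preserving both monotonicity and the rational limit) yet remains rich enough to carry the halting information in its precise value.
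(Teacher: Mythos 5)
Your proposal is correct and takes essentially the same approach as the paper: both proofs encode the halting behaviour of a machine on input $n$ as a small dyadic perturbation added to a monotone rational skeleton with rational limit, so that the sequence is effectively approximable as a sequence of computable numbers while exact sign/numerator/denominator data would decide a non-recursive recursively enumerable set, and both obtain the superadditive sequence in part 2 by multiplying the monotone sequence by $n$.
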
\begin{proof}\mbox{}
        \begin{itemize}
            \item[1.] 
                We prove the statement by contradiction. Suppose all sequences \((\rw_n)_{n\in\NN}\) that satisfy all other requirements of \emph{Theorem \ref{thm:foo}} (Statement 1) are computable sequences of rational numbers. Let \(A\subset\NN\) be a recursively enumerable, non-recursive set and consider a Turing machine \(TM\) that \emph{accepts} \(A\), i.e. the domain of the corresponding recursive function \(f_{TM} : \NN \hookrightarrow \NN\) equals \(A\). In other words, the Turing machine \(TM\) \emph{halts} for input \(n\in\NN\) if and only if \(n\in A\) is satisfied. The recursive enumerability of \(A\) ensures the existence of such a Turing machine. For all \(n,m\in\NN\), define
                \begin{align}
                    l(n,m) :=\left\{    \begin{array}{ll}
                                            m_0     & ~\text{if, for input \(n \in \NN\),} \\
                                                    & ~\text{\(TM\) halts after}\\
                                                    & ~m_0 < m~\text{steps},\\
                                            m   & ~\text{otherwise}.
                                        \end{array}\right. 
                \end{align}
                Furthermore, for all \(n,m\in\NN\), define
                \begin{align}
                    r_{m,n} := \sum_{k=0}^{l(n,m)}\frac{1}{2^k} = 2-\frac{1}{2^{l(n,m)}}.
                \end{align}
                The sequence \((r_{m,n})_{m,n\in\NN}\) is a computable double sequence of rational numbers that is monotonically non-decreasing in \(m\) for all \(n\in\NN\). Assume \(n \in A\) holds true for some \(n\in\NN\). Then, there exists \(m_0\in\NN\) such that for all \(m\in\NN\) that satisfy \(m_0 \leq m\), we have \(l(n,m) = m_0\). Thus, if \(n\in\NN\) holds true, we have
                \begin{align}
                    \lim_{m\to\infty} r_{m,n} = 2 -\frac{1}{2^{m_0}} \in \QQ.
                \end{align}
                On the other hand, assume \(n\notin A\) holds true for some \(n\in\NN\). Then, for all \(m\in\NN\), we have \(l(n,m) = m\). Thus, if \(n\notin\NN\) holds true, we have
                \begin{align}
                    \lim_{m\to\infty} r_{m,n} = \lim_{m\to\infty} \left(2 -\frac{1}{2^{m}}\right) = 2 \in \QQ.
                \end{align}
                Define the sequence \((x_n)_{n\in\NN}\) via \(x_n := \lim_{m\to\infty} r_{m,n}\) for all \(n\in\NN\). Then, \((x_n)_{n\in\NN}\) is a sequence of numbers in \(\QQ\). By construction, we furthermore have \(| x_n - r_{m,n}| < \sfrac{1}{2^{M}}\) for all \(n,m,M\) that satisfy \(m \geq M + 1\). Thus, setting 
                \begin{align}
                    \kappa : \NN \rightarrow \NN, M \mapsto M + 1,   
                \end{align} 
                we conclude that the pair \(\big((r_{m,n})_{m,n\in\NN}, \kappa\big)\) is a standard description of the computable sequence \((x_n)_{n\in\NN}\) of computable numbers. For all \(n\in\NN\), define
                \begin{align}
                    \rw_n := 1 - \frac{1}{n} + \frac{1}{2n(n+1)}x_n.
                \end{align}
                Then, \((\rw_n)_{n\in\NN}\) is a non-negative, computable sequence of computable numbers attaining values in \(\QQ\). We have \(1 \leq x_n \leq 2\) and thus 
                \begin{align}
                    \big|1-\rw_n\big|=\left|\frac 1n-\frac 1{2(n(n+1))}x_n\right|<\frac{1}{n}   
                \end{align}
                for all \(n\in\NN\), i.e., the sequence \((\rw_n)_{n\in\NN}\) converges effectively towards \(1\) for all \(n\in\NN\). Furthermore, we have
                \begin{eqnarray}
                    \rw_{n+1}-\rw_n &=& \frac 1n - \frac 1{n+1}+ \frac 1{2(n+1)(n+2)}x_{n+1}\nonumber\\ &&- \frac 1{2n(n+1)}x_n\nonumber\\
                                    &>&  \frac 1{n(n+1)}- \frac 1{2n(n+1)}x_n\nonumber\\
                                    &\geq& \frac 1{n(n+1)}- \frac 1{n(n+1)}=0,
                \end{eqnarray}
                for all \(n\in\NN\), i.e., the sequence \((\rw_n)_{n\in\NN}\) is monotonically non-decreasing in \(n\). In summary, \((\rw_n)_{n\in\NN}\) is a non-negative, monotonically non-decreasing, computable sequence on computable numbers that satisfies \(\rw_n \in \QQ\) for all \(n\in \NN\) and converges effectively towards \(1\). By assumption, \((\rw_n)_{n\in\NN}\) is also a computable sequence of rational numbers. As \((\rw_n)_{n\in\NN}\) is furthermore non-negative, there exist recursive functions \(f_{\mathrm{nu}},f_{\mathrm{de}} :\NN \rightarrow \NN\) such that \(\rw_n = \sfrac{f_{\mathrm{nu}}(n)}{f_{\mathrm{de}}(n)}\) is satisfied for all \(n\in\NN\). Since comparisons on rational numbers are recursive operations, there exists a Turing machine \(TM' : \NN \rightarrow \NN\) that satisfies
                \begin{align}
                    TM'(n) = \left\{\begin{array}{ll}
                    1   & \text{if}~\frac{f_{\mathrm{nu}}(n)}{f_{\mathrm{de}}(n)} < 1 - \frac{1}{n+1}\\
                    0   & \text{otherwise}\end{array}\right.
                \end{align}
                for all \(n\in\NN\). We have \(TM'(n) = 1\) if and only if \(x_n < 2\). Consequently, \(TM'= \mathds {1}_A\) holds true, which contradicts the assumption of \(A\) being non-recursive.
            \item[2.] 
                Consider a sequence \((\rw_n)_{n\in\NN}\) that satisfies the requirements of \emph{Theorem \ref{thm:foo}} (Statement 1). For all \(n\in\NN\), define \(\ra_n := n\cdot \rw_n\). Then, the sequence \((\ra_n)_{n\in\NN}\) satisfies the requirements of \emph{Theorem \ref{thm:foo}} (Statement 2).
        \end{itemize}
    \qed\end{proof}

\section{Summary and Discussion}        %
\label{sec:discussions}                 %
% ------------------------------------- %
    
    By proving the equalities \(\Sigma_1 = \Lp\) and \(\Pi_1 = \Lb\), we have established a full characterization of Fekete's lemma in view of the arithmetical hierarchy of real numbers and shown that a whole class of uncomputable problems can be represented by super- and subadditive sequences. As a corollary, we have seen that for a computable sequence \((\ra_n)_{n\in\NN}\) of computable numbers, superadditivity is \emph{not} a sufficient condition for the sequence \((\ra_n\cdot n^{-1})_{n\in\NN}\) to converge effectively. 

    If the conditions \((\lim_{n\to\infty} \ra_n\cdot n^{-1}) \in \RR_c\) and \(0 \leq \ra_n\) for all \(n\in\NN\) are satisfied additionally, then \((\ra_n\cdot n^{-1})_{n\in\NN}\) does converge effectively. Furthermore, if a suitable "converse" \((\la_n)_{n\in\NN}\) is available, a modulus of convergence for \((\ra_n\cdot n^{-1})_{n\in\NN}\) can be determined algorithmically. 

    On the other hand, given a non-negative, computable sequence \((x_n)_{n\in\NN}\) of computable numbers that converges towards a computable number, superadditivity of the sequence \((n\cdot x_n)_{n\in\NN}\) is a sufficient condition for effective convergence of \((x_n)_{n\in\NN}\). As pointed out in the beginning of \emph{Section \ref{effect}}, convergence towards a computable number is \emph{not} a sufficient condition for effective convergence in general. Whether the superadditivity-requirement on \((n\cdot x_n)_{n\in\NN}\) to ensure effective convergence of \((x_n)_{n\in\NN}\) can be further weakened is a pressing open question. 

    On a different note, our results imply that no constructive proof for Fekete's lemma can exist, and furthermore, no algorithm can exist that generally allows for the effective computation of the associated limit values. As already mentioned in \emph{Section \ref{sec:Intr}}, other methods of information theory have recently been investigated regarding their computability properties, and it has been shown that they are non-constructive. In particular, the associated problems cannot be solved algorithmically. Despite the fact that this behavior occurs in multiple cases, it does not seem to be sufficiently understood. It is present in Shannon's fundamental publication in information theory, where he proves the existence of capacity-achieving codes, but is unable to construct them effectively \cite{Sha48}. It was only recently shown in \cite{BSP21} that an effective construction of capacity achieving codes as a function of the channel is not possible. Even for several results that are significantly less complex in their structure, an analysis shows that the corresponding proofs are non-constructive. That is, the mathematical object of interest is proven to exist, but no algorithm to explicitly generate it is devised. The approach by Blahut and Arimoto for the numerical approximation of the capacity of discrete memoryless channels yields a particularly simple example \cite{Bla72, Ari72}. The computation of the capacity as a number is straightforward, due to the properties of the mutual information. Blahut and Arimoto describe a \emph{procedure} for the characterization of a corresponding optimizer, but the proof of their method is non-constructive \cite{Ari72,Csi74,CsTu84,Cov84} (in \cite{Cov84}, a related problem has been considered). 
    
    In a recent publication \cite{boche2023algorithmic} it was shown that no algorithm for Turing machines can exist for calculating or approximating the optimal input distribution. In \cite{lee2023computability}, the impossibility of calculating optimizations for other information-theoretic problems, including log-investment optimization from \cite{cover1984algorithm} was shown. A discussion of general approaches to the calculation of optimal values of functions and the calculation of optimization of functions can be found in \cite{lee2024computability}.
    
    For interesting weakening of the superadditivity requirement, reference is made to the work of Erdös \cite{BE51,BE52} and the current work \cite{FR18} by Füredi and Rushka.

Finally, the following problem should be mentioned in connection with the Fekete lemma, which was also initiated by Ning Cai. As already mentioned in the introduction, the Fekete lemma immediately yields the result that pessimistic and optimistic capacity are equal. The prerequisite for the Fekete lemma for the application in \cite{ABBN13}, \cite{BBS13} and \cite{WNB16} are always fulfilled in these works, since memoryless channels are considered there. However, memoryless channels do not necessarily lead to equal pessimistic and optimistic capacities. For example, for deterministic identification with finite dimensional output distributions but continuous input distributions in \cite{colomer2024deterministic}, it was shown that pessimistic and optimistic identification capacities assume different values.  

%\begin{acknowledgements}
%If you'd like to thank anyone, place your comments here
%and remove the percent signs.
%\end{acknowledgements}

\section*{Acknowledgments}
The authors acknowledge the financial support by the Federal Ministry of Education and Research of Germany in the programme of “Souverän. Digital. Vernetzt.”. Joint project 6G-life, project identification number: 16KISK002. Holger Boche and Christian Deppe were supported in part by the Bundesministerium 
f\"ur Bildung und Forschung (BMBF) through the grants 16KISQ093 (QUIET), 16KIS1598K (QuaPhySI), 16KISQ077 (QDCamNetz), and 16KISR027K (QTREX). Christian Deppe was supported in part by the BMBF through grant 16KIS1005 (NEWCOM).
% BibTeX users please use one of
% BibTeX users please use one of
\bibliographystyle{spbasic}      % basic style, author-year citations
\bibliography{references}   % name your BibTeX data base
\end{document}